\patchcmd\algocf@Vline{\vrule}{\vrule \kern-0.4pt}{}{}
\patchcmd\algocf@Vsline{\vrule}{\vrule \kern-0.4pt}{}{}
\definecolor{darkgrey}{gray}{0.3}
\definecolor{commentcolor}{gray}{0.5}
\crefname{algocf}{Algorithm}{Algorithms}
\let\cref@old@stepcounter\stepcounter
\def\stepcounter#1{%
  \cref@old@stepcounter{#1}%
  \cref@constructprefix{#1}{\cref@result}%
  \@ifundefined{cref@#1@alias}%
    {\def\@tempa{#1}}%
    {\def\@tempa{\csname cref@#1@alias\endcsname}}%
  \protected@edef\cref@currentlabel{%
    [\@tempa][\arabic{#1}][\cref@result]%
    \csname p@#1\endcsname\csname the#1\endcsname}}
\theoremstyle{plain}
\newtheorem{theorem}{Theorem}[section]
\newtheorem{lemma}[theorem]{Lemma}
\newtheorem{corollary}[theorem]{Corollary}
\newtheorem{property}[theorem]{Property}
\theoremstyle{definition}
\theoremstyle{remark}
\newcommand*{\bbN}{{\mathbb{N}}}
\newcommand*{\bbR}{{\mathbb{R}}}
\newcommand{\cV}{\mathcal{V}}
\newcommand*{\cA}{{\mathcal{A}}}
\newcommand*{\cJ}{{\mathcal{J}}}
\newcommand*{\cR}{{\mathcal{R}}}
\newcommand*{\defeq}{\coloneqq}
\let\poly\relax
\DeclareMathOperator{\poly}{poly}
\newcommand*{\bigOh}{\mathcal{O}}
\renewcommand{\vec}[1]{\bm{#1}}
\newcommand*{\mat}[1]{\mathbf{#1}}
\newcommand*{\bv}{\vec{v}}
\newcommand*{\range}[1]{[\![#1]\!]}
\tikzset{
  fitting node/.style={
      inner sep=0pt,
      fill=none,
      draw=none,
      reset transform,
      fit={(\pgf@pathminx,\pgf@pathminy) (\pgf@pathmaxx,\pgf@pathmaxy)}
    },
  reset transform/.code={\pgftransformreset}
}
\tikzset{cross/.style={path picture={
          \draw[black]
          (path picture bounding box.south east) -- (path picture bounding box.north west) (path picture bounding box.south west) -- (path picture bounding box.north east);
        }}}
\tikzstyle{ox}=[semithick,draw=black,circle,cross,inner sep=1.2mm]
\newcommand{\declarecolor}[2]{\definecolor{#1}{RGB}{#2}\expandafter\newcommand\csname #1\endcsname[1]{\textcolor{#1}{##1}}}
\NewDocumentCommand{\numberthis}{om}{%
  \IfNoValueTF{#1}{%
    \refstepcounter{equation}\tag{\theequation}%
  }{%
    \tag{#1}%
  }%
  \label{#2}%
}
\newcommand{\timehat}[1]{^{(#1)}}
\newtoks\mymathaccents
\def\[#1\]{%
\begin{align*}#1\end{align*}%
}
\newcommand{\ebar}{\bar{\vec{e}}}
\newcommand{\bbone}{\mathds{1}}
\newcommand{\vb}{\vec{b}}
\newcommand{\vv}{\vec{v}}
\newcommand{\vs}{\vec{s}}
\newcommand{\vx}{\vec{x}}
\newcommand{\vpi}{\vec{\pi}}
\newcommand{\vy}{\vec{y}}
\newcommand{\vw}{\vec{w}}
\newcommand{\vlam}{\vec{\lambda}}
\newcommand{\vone}{\vec{1}}
\newcommand{\vm}{\vec{m}}
\newcommand{\vl}{\vec{\ell}}
\newcommand{\vzero}{\vec{0}}
\newcommand{\KL}[2]{D_\mathrm{KL}(#1\,\|\,#2)}
\newcommand{\emptyseq}{\varnothing}
\newcommand{\ie}{\emph{i.e.},~}
\newcommand{\eg}{\emph{e.g.},~}
\newcommand{\Npp}{\bbN_{>0}}
\DeclareMathOperator*{\E}{\mathbb{E}}
\tikzset{
  fitting node/.style={
      inner sep=0pt,
      fill=none,
      draw=none,
      reset transform,
      fit={(\pgf@pathminx,\pgf@pathminy) (\pgf@pathmaxx,\pgf@pathmaxy)}
    },
  reset transform/.code={\pgftransformreset}
}
\tikzset{cross/.style={path picture={
          \draw[black]
          (path picture bounding box.south east) -- (path picture bounding box.north west) (path picture bounding box.south west) -- (path picture bounding box.north east);
        }}}
\tikzstyle{ox}=[semithick,draw=black,circle,cross,inner sep=1.2mm]
\crefname{property}{Property}{Properties}
\tikzset{cross/.style={path picture={
          \draw[black]
          (path picture bounding box.south east) -- (path picture bounding box.north west) (path picture bounding box.south west) -- (path picture bounding box.north east);
        }}}
\tikzstyle{chanode}   = [fill=white,draw=black,circle,cross,inner sep=.8mm]
\tikzstyle{pl1node}   = [fill=black,draw=black,circle,inner sep=.55mm]
\tikzstyle{pl2node}   = [fill=white,draw=black,circle,inner sep=.55mm]
\tikzstyle{termina}   = [fill=white,draw=black,inner sep=.6mm]
\tikzstyle{decpt}     = [fill=black,draw=black,inner sep=.8mm]
\tikzstyle{obspt}     = [fill=white,draw=black,cross,inner sep=0.95mm]
\tikzstyle{highlight} = [line width=1.99]
\tikzstyle{infoset}   = [thin,draw=black!30!white,fill=black!5]
\newcommand{\nset}{{\Omega^d_n}}\label{sec:nset}
\icmltitlerunning{Kernelized Multiplicative Weights for 0/1-Polyhedral Games}
\begin{document}
\twocolumn[
  \icmltitle{Kernelized Multiplicative Weights for 0/1-Polyhedral Games: Bridging the Gap Between Learning in Extensive-Form and Normal-Form Games}

  \icmlsetsymbol{equal}{*}

  \begin{icmlauthorlist}
    \icmlauthor{Gabriele Farina}{cmu}
    \icmlauthor{Chung-Wei Lee}{usc}
    \icmlauthor{Haipeng Luo}{usc}
    \icmlauthor{Christian Kroer}{col}
  \end{icmlauthorlist}

  \icmlaffiliation{cmu}{Computer Science Department, Carnegie Mellon University}
  \icmlaffiliation{usc}{Computer Science Department, University of Southern California}
  \icmlaffiliation{col}{IEOR Department, Columbia University}

  \icmlcorrespondingauthor{Gabriele Farina}{gfarina@cs.cmu.edu}
  \icmlcorrespondingauthor{Chung-Wei Lee}{leechung@usc.edu}
  \icmlcorrespondingauthor{Haipeng Luo}{haipengl@usc.edu}
  \icmlcorrespondingauthor{Christian Kroer}{christian.kroer@columbia.edu}


  \vskip 0.3in
]

\printAffiliationsAndNotice{}

\begin{abstract}
  While extensive-form games (EFGs) can be converted into normal-form games (NFGs), doing so comes at the cost of an exponential blowup of the strategy space. So, progress on NFGs and EFGs has historically followed separate tracks, with the EFG community often having to catch up with advances (\eg last-iterate convergence and predictive regret bounds) from the larger NFG community. In this paper we show that the Optimistic Multiplicative Weights Update (OMWU) algorithm---the premier learning algorithm for NFGs---can be simulated on the normal-form equivalent of an EFG in linear time per iteration in the game tree size using a kernel trick. The resulting algorithm, \emph{Kernelized OMWU (KOMWU)}, applies more broadly to all convex games whose strategy space is a polytope with 0/1 integral vertices, as long as the kernel can be evaluated efficiently. In the particular case of EFGs, KOMWU closes several standing gaps between NFG and EFG learning, by enabling direct, black-box transfer to EFGs of desirable properties of learning dynamics that were so far known to be achievable only in NFGs. Specifically, KOMWU gives the first algorithm that guarantees at the same time last-iterate convergence, lower dependence on the size of the game tree than all prior algorithms, and $\tilde{\bigOh}(1)$ regret when followed by all players.
\end{abstract}

\section{Introduction}
\newcommand{\comparisontablehere}{
    \begin{table*}[th]
        \centering
        \scalebox{.95}{\begin{tabular}{m{10cm}m{5cm}>{\centering\arraybackslash}m{1.6cm}}
                Algorithm                                                                    & Per-player regret bound                             & \makebox[1cm]{Last-iter. conv.$^\dagger$~~~} \\
                \toprule
                CFR (regret matching / regret matching$^+$) \hfill\citep{Zinkevich07:Regret} & $\bigOh(\sqrt{A}\,\|Q\|_1 \ T^{1/2})$               & no                                           \\
                CFR (MWU) \hfill\citep{Zinkevich07:Regret}                                   & $\bigOh(\sqrt{\log A}\,\|Q\|_1 \ T^{1/2})$          & no                                           \\
                FTRL / OMD (dilated entropy) \hfill\citep{Kroer20:Faster}                    & $\bigOh(\sqrt{\log A}\,2^{D/2}\,\|Q\|_1 \ T^{1/2})$ & no                                           \\
                FTRL / OMD (dilatable global entropy) \hfill\citep{Farina21:Better}          & $\bigOh(\sqrt{\log A}\,\|Q\|_1 \ T^{1/2})$          & no                                           \\
                \rowcolor{gray!20}\bf Kernelized MWU\hfill (this paper)                      & $\bigOh(\sqrt{\log A}\,\sqrt{\|Q\|_1}\ T^{1/2})$    & \textbf{no}                                  \\
                \midrule
                Optimistic FTRL / OMD (dilated entropy) \hfill\citep{Kroer20:Faster}         & $\bigOh(\sqrt{m}\log(A)\,2^D\,\|Q\|_1^2 \ T^{1/4})$ & \phantom{$^*$}yes$^*$                        \\
                Optimistic FTRL / OMD (dilatable gl. ent.) \hfill\citep{Farina21:Better}     & $\bigOh(\sqrt{m}\log(A)\,\|Q\|_1^2\ T^{1/4})$       & no                                           \\
                \rowcolor{gray!20}\bf Kernelized OMWU\hfill (this paper)                     & $\bigOh(m\log(A)\,\|Q\|_1\ \log^4(T))$              & \textbf{yes}                                 \\
                \bottomrule
            \end{tabular}}
        \caption{
            Properties of various no-regret algorithms for EFGs.
            All algorithms take linear time to perform an iteration.
            The first set of rows are for non-optimistic algorithms. The second set of rows are for optimistic algorithms.
            The regret bounds are per player and apply to multiplayer general-sum games. They depend on the maximum number of actions $A$ available at any decision point, the maximum $\ell_1$ norm $\|Q\|_1 = \max_{q\in Q} \|q\|_1$ over the player's decision polytope $Q$, the depth $D$ of the decision polytope, and the number of players $m$.
            Optimistic algorithms have better asymptotic regret, but worse dependence on the game constants $m$, $A$, and $\|Q\|_1$.
            Note that our algorithms achieve better dependence on $\|Q\|_1$ compared to all existing algorithms.
            $^\dagger$Last-iterate convergence results are for two-player zero-sum games, and some results rely on the assumption of a unique Nash equilibrium---see \cref{sec:efg analysis} for details.
            $^*$\citet{Lee21:Last}.
        }
        \label{tab:bounds}
    \end{table*}
}
\comparisontablehere

Online learning in the context of \emph{normal-form games} (NFGs) has been studied extensively.
A classic motivation for this study is that when every player in an NFG learns from $T$ rounds of repeated play using a no-regret learning algorithm such as multiplicative weights update (MWU), the average product distribution of play is a $\bigOh(1/\sqrt{T})$-approximate Nash equilibrium in two-player zero-sum games, and a $\bigOh(1/\sqrt{T})$-approximate coarse-correlated equilibrium in multiplayer general-sum games.
In the last decade, much stronger results have been obtained when each player employs an \emph{optimistic} no-regret learner such as the \emph{optimistic MWU} (OMWU) algorithm~\citep{Rakhlin13:Online,Rakhlin13:Optimization,Syrgkanis15:Fast}.
For example, in zero-sum NFGs OMWU enables convergence to a Nash equilibrium at a rate of $\bigOh(1/T)$ and various \emph{last-iterate} guarantees~\citep{Daskalakis18:Last,Lei21:Last,Wei21:Linear}. For general-sum NFGs, polylogarithmic regret bounds have been shown when every player uses OMWU~\citep{Daskalakis21:Near}, implying convergence to a coarse-correlated equilibrium at a $\tilde{\bigOh}(1/T)$ rate.

In this paper we study \emph{extensive-form games} (EFGs), a much richer class of games that explicitly model sequential (or simultaneous) interaction, stochastic outcomes, and imperfect information. Because of their sequential nature, the number of deterministic strategies in an EFG is exponential in the size of the game, unlike for NFGs.
Computing, or approximating, Nash equilibria of large EFGs has been a key component of recent AI milestones where AIs were created that beat human poker players~\citep{Bowling15:Heads,Brown19:Superhuman,Brown17:Superhuman,Moravvcik17:DeepStack}.
These results relied on online learning algorithms for the decision sets of the players in an EFG, where each iteration of the algorithm is performed in linear time in the game tree size (which is crucial due to the large size of these games).

Online learning results for EFGs are generally somewhat harder to come by, and have often lagged behind results for NFGs. This is due to the more complicated combinatorial structure of the decision spaces in EFGs.
For example, the following concepts were all developed later for EFGs than for NFGs, and sometimes with weaker guarantees: good distance measures~\citep{Hoda10:Smoothing,Kroer15:Faster,Kroer20:Faster,Farina21:Better}, optimistic regret-minimization algorithms~\citep{Farina19:Optimistic,Farina19:Stable}, and last-iterate convergence results~\citep{Wei21:Linear,Lee21:Last}. Very recent NFG results such as the polylogarithmic regret bounds for OMWU dynamics in general-sum NFGs~\citep{Daskalakis21:Near} do not currently have an analogue for EFGs.

In principle, an EFG can be represented as a NFG where each action in the NFG corresponds to an assignment of decisions at \emph{each} decision point in the EFG.
One could then run, \eg OMWU on this normal-form representation, and receive all the guarantees obtained for NFGs directly.
However, this reduction is exponentially-large in the size of the EFG representation, and for this reason the normal-form representation was viewed as impractical.
This leads to the necessity of developing the various more complicated approaches mentioned in the previous paragraph.

We contradict popular belief and show that it is possible to work with the normal form efficiently: we provide a kernel-based reduction from EFGs to NFGs that allows us to simulate MWU and OMWU on the normal-form representation, using only linear (in the EFG size) time per iteration.
Our algorithm, \emph{Kernelized OMWU} (KOMWU), closes the gap between NFGs and EFGs; KOMWU achieves all the guarantees provided by the various normal-form results mentioned previously, as well as any future results on OMWU for NFGs.
As an unexpected byproduct, KOMWU obtains new state-of-the-art regret bounds among all online learning algorithms for EFGs (see also \cref{tab:bounds}); we improve the dependence on the maximum $\ell_1$ norm $\|Q\|_1$ over the sequence-form polytope $Q$ from $\|Q\|_1^2$ to $\|Q\|_1$ (for the non-optimistic version we improve it from $\|Q\|_1$ to $\sqrt{\|Q\|_1}$).
Due to the connection between regret minimization and convergence to Nash equilibrium, this also improves the state-of-the-art bounds for converging to a Nash equilibrium at either a rate of $1/\sqrt{T}$ or $1/T$ by the same factor.
Moreover, KOMWU achieves last-iterate convergence, and as it is the first algorithm to achieve linear-rate last-iterate convergence with a learning rate that does not become impractically-small as the game grows large (albeit under a restrictive uniqueness assumption).

More generally, we show that KOMWU can simulate OMWU for \emph{0/1-polyhedral sets} (of which the decision sets for EFGs are a special case):
a decision set $\Omega \subseteq \bbR^d$ which is convex and polyhedral, and whose vertices are all contained in $\{0,1\}^d$.
KOMWU reduces the problem of running OMWU on the vertices of the polyhedral set to $d+1$ evaluations of what we call the \emph{0/1-polyhedral kernel}.
Thus, given an efficient algorithm for performing these kernel evaluations, KOMWU enables one to get all the benefits of running MWU or OMWU on the simplex of vertices, while retaining the crucial property that each iteration of OMWU can be performed efficiently.
In addition to EFGs, in the appendix we show that the kernel can be computed efficiently for several other settings including $n$-sets, unit cubes, flows on directed acyclic graphs, and permutations.
As with EFGs, this immediately gives us an efficient algorithm with favorable properties such as last-iterate convergence and polylogarithmic regret for games with 0/1-polyhedral strategy sets.
In particular, for $n$-sets, we show an improvement on the time complexity per round compared with the dynamic programming approach discussed in \citep{Takimoto03:Path}.
To the best of our knowledge, this is the state-of-the-art bound for simulating MWU/OMWU on $n$-sets.

\noindent\textbf{Related work}\quad
There were several past works on specialized online learning methods for EFGs.
One class of methods is based on specialized Bregman divergences that lead to efficient iteration updates~\citep{Hoda10:Smoothing,Kroer15:Faster,Kroer20:Faster,Farina21:Better}. Combined with optimistic regret minimizers for general convex set, this yields stronger regret bounds that take into account the variation in payoffs, and combined with the connection between regret minimization and Nash equilibrium computation, this yields $1/T$-rate convergence for two-player zero-sum games~\citep{Rakhlin13:Optimization,Syrgkanis15:Fast,Farina19:Optimistic}.
The \emph{counterfactual regret minimization} (CFR) framework~\citet{Zinkevich07:Regret} also yields efficient iteration updates. This approach yields a worse $\sqrt{T}$ regret bound, but leads to the best practical performance in most games~\citep{Kroer18:Solving,Kroer20:Faster,Farina21:Faster}.
\citet{Farina19:Stable} show that it is possible to attain $\bigOh(T^{1/4})$ regret within the CFR framework by using OMWU at each decision point. However, the game-dependent constants in their bound are much worse than the ones in \cref{tab:bounds}.

Regret minimization over 0/1 polyhedral sets, the framework we consider, is closely related to online combinatorial optimization problems \citep{audibert2014regret}, where the decision maker (randomly) selects a 0/1 vertex in each round instead of a point in the convex hull of the set of vertices, and the regret is measured in expectation.
We review approaches related to the use of MWU here, and other less closely related approaches in \cref{app:related works}.
One approach similar to our KOMWU is to perform MWU over vertices (\eg \citet{cesa2012combinatorial}); the remaining problem is whether there is an efficient way to maintain and sample from the weights.
Such efficient implementations have been shown in many instances such as paths \citep{Takimoto03:Path}, spanning trees \citep{koo2007structured}, and $m$-set \citep{warmuth2008randomized}. %
\citep{Takimoto03:Path} is the closest to this paper, where they show how to produce MWU iterates for paths in directed graphs.
Our kernelized method can be seen as a significant extension of their approach to general 0/1 polyhedral games, unifying many of the previous results listed above.
This unification not only results in important applications to EFGs,
but also leads to improvement to previously studied problems such as $n$-sets.

\section{Preliminaries}\label{sec:preliminaries}

In this section we review some fundamental connections between normal-form games and no-regret learners.

\subsection{Online Learning and Multiplicative Weights Update}\label{sec:online learning}

Given a finite set of choices $\cA$, consider the following abstract model of a repeated decision-making problem between a decision maker and an unknown---potentially adversarial---environment. At each time $t=1,2,\dots$, the decision maker is given (or otherwise selects) a \emph{prediction vector} $\vm\^t \in \bbR^{\cA}$. Then, the decision maker must select and output a probability distribution $\vlam\^t$ over $\cA$, that is, a vector
$
    \vlam\^t \in \Delta(\cA)\defeq \mleft\{\vlam\in\bbR_{\ge 0}^{\cA}: \sum_{a\in\cA}\vlam[a] = 1 \mright\}.
$
Finally, the environment picks (possibly in an adversarial way) a \emph{loss vector} $\vl\^t \in \bbR^{\cA}$ and shows it to the decision maker, who then suffers a loss equal to
$
    \langle \vl\^t, \vlam\^t\rangle.%
$
Given any time $T$, a key quantity for the decision maker is its \emph{cumulative regret} (or simply \emph{regret}) up to time $T$,
\[
    R^T \defeq \sum_{t=1}^T \langle\vl\^t,\vlam\^t\rangle - \min_{\hat{\vlam}\in\Delta(\cA)} \sum_{t=1}^T \langle\vl\^t, \hat{\vlam}\rangle.
    \numberthis{eq:def regret simplex}
\]
As we recall in the next subsection, decision-making algorithms that guarantee sublinear regret (in $T$) in the worst case make for natural agents to learn equilibria in games. The most well-studied decision-making algorithm with that property is the \emph{optimistic multiplicative weights update (OMWU)} algorithm.\footnote{In the literature, OMWU is often given under the assumption that $\vm\^t = \vl\^{t-1}$ at all times $t$. In this paper we present OMWU in its general form, that is, with no assumptions on $\vm\^t$.}
Let $\vl\^0,\vm\^0 \defeq\vzero\in\bbR^{\cA}$ and $\vlam\^0 \defeq \frac{1}{|\cA|}\vone \in \Delta(\cA)$; then, at all times $t \in \Npp$, OMWU updates the distribution $\vlam\^{t-1}\in\Delta(\cA)$ according to
\[
    \vlam\^t[a] \defeq \frac{\vlam\^{t-1}[a]\cdot e^{-\eta\^t\,\vw\^t[a]}}{\sum_{a' \in \cA} \vlam\^{t-1}[a']\cdot e^{-\eta\^t\,\vw\^t[a']}}
    \numberthis[$\blacklozenge$]{eq:vanilla OMWU}
\]
for all $a\in\cA$, where
$
    \vw\^t \defeq \vl\^{t-1} - \vm\^{t-1} + \vm\^t
$
and $\eta\^t \!>\!0$ is a learning rate
(full pseudocode is given in \cref{app:pseudocode}). The nonpredictive version of OMWU, called \emph{multiplicative weights update (MWU)}, is obtained from OMWU as the special case in which $\vm\^t=\vzero$ at all $t$.

\subsection{Normal-Form Games (NFGs)}\label{sec:nfgs}

\emph{Normal-form games (NFG)} are simultaneous-move, nonsequential games in which each player picks an action from a finite set, and receives a payoff that depends on the tuple of actions played by the players. Formally, we represent a normal form game as a tuple $\Gamma = (m, \{\cA_i\}, \{U_i\})$, where the positive integer $m\in\Npp$ denotes the number of players, each of which is assigned a unique player number in the set $\range{m} \defeq \{1,\dots,m\}$; the finite set $\cA_i$ specifies the actions available to player $i\in\range{m}$; and $U_i : \cA_1\times\dots\times\cA_m\to [0,1]$ is the payoff function for player $i\in\range{m}$. The game is said to be \emph{zero-sum} if $\sum_{i\in\range{m}} U_i(a_1,\dots,a_m) = 0$ for all $(a_1,\dots,a_m) \in \cA_1\times\dots\times\cA_m$.

A \emph{mixed strategy} for any player $i\in\range{m}$ is a probability distribution
$\vlam_i \in \Delta(\cA_i)$ over the player's action set $\cA_i$. When the players play according to mixed strategies $\vlam_1,\dots,\vlam_m$, the \emph{expected utility} $\bar U_i$ of any player $i\in\range{m}$ is defined accordingly as the function
$
    \bar U_i : (\vlam_1,\dots,\vlam_m) \mapsto \E_{a_1 \sim \vlam_1,\dots,a_m\sim\vlam_m}\big[U_i(a_1, \dots, a_m)\big].
$
Because of the linearity of expectation, the expected utility function $\bar U_i$ of each player $i$ is a \emph{multilinear} function of the strategies $\vlam_1,...,\vlam_m$.

\textbf{Learning in NFGs}\quad
We now describe a learning setup for NFGs, which we will refer to as the \emph{canonical optimistic learning setup (COLS)}. In the COLS, the NFG is played repeatedly. At each time $t \in \Npp$, each player $i\in\range{m}$ picks mixed strategies $\vlam_i\^t \in \Delta(\cA_i)$ according to a learning algorithm $\cR_i$, with the following choice of loss and prediction vectors:
\begin{itemize}[nosep,left=0mm]
    \item The loss vector $\vl_i\^t$ is the opposite of the gradient of the expected utility of player~$i$ with respect of player~$i$'s strategy, in symbols $\vl\^t_i \defeq -\nabla_{\vlam_i} \bar U_i(\vlam_1\^t,\dots,\vlam_m\^t)$;
    \item The prediction vector $\vm_i\^t$ is defined as the previous loss $\vm_i\^t \defeq \vl\^{t-1}_i$ if $t \ge 2$, and $\vm_i\^1 \defeq \vzero$ otherwise.
\end{itemize}

This is the same setup that was used in landmark papers such as \citep{Syrgkanis15:Fast} and \citep{Daskalakis21:Near}. A key result in the theory of learning in games establishes a deep connection between the COLS and \emph{coarse-correlated equilibria (CCEs)} of the game (which, in two-player zero-sum games, are Nash equilibria).

\begin{theorem}\label{thm:nfg cce}
    Under the COLS, the average product distribution of play $\bar{\vec{\mu}}\defeq \frac{1}{T}\,\sum_{t=1}^T \vlam_1\^t\otimes\dots\otimes\vlam_m\^t$ is an $\bigOh(\max_{i\in\range{m}} R_i^T / T)$-approximate CCE of the game,
    where $R_i^T$ is the regret for player $i$ (see Eq.~\eqref{eq:def regret simplex}).
\end{theorem}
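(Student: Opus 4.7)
The plan is to unfold the definition of an $\epsilon$-CCE in terms of the product distribution $\bar{\vec{\mu}}$ and reduce the deviation inequality to the per-player regret bound, using the multilinearity of $\bar U_i$ in each player's strategy. Recall that $\bar{\vec{\mu}}$ is an $\epsilon$-CCE iff for every player $i\in\range{m}$ and every pure deviation $a_i^\star\in\cA_i$,
\[
    \E_{(a_1,\dots,a_m)\sim\bar{\vec{\mu}}}\!\big[U_i(a_i^\star, a_{-i})\big] - \E_{(a_1,\dots,a_m)\sim\bar{\vec{\mu}}}\!\big[U_i(a_1,\dots,a_m)\big] \le \epsilon.
\]
Since $\bar{\vec{\mu}}$ is the uniform mixture over $t=1,\dots,T$ of the product distributions $\vlam_1\^t\otimes\cdots\otimes\vlam_m\^t$, the first step is to rewrite both expectations as averages of $\bar U_i$ evaluated at mixed strategy profiles: the on-policy term equals $\frac{1}{T}\sum_t \bar U_i(\vlam_1\^t,\dots,\vlam_m\^t)$, and the deviation term equals $\frac{1}{T}\sum_t \bar U_i(\vlam_1\^t,\dots,\vec{e}_{a_i^\star},\dots,\vlam_m\^t)$, by linearity of $\bar U_i$ in $\vlam_i$.

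Next I would connect these quantities to player $i$'s COLS loss vectors. By construction $\vl_i\^t = -\nabla_{\vlam_i}\bar U_i(\vlam_1\^t,\dots,\vlam_m\^t)$, and multilinearity of $\bar U_i$ gives $\bar U_i(\vlam_1\^t,\dots,\vlam_m\^t) = \langle \nabla_{\vlam_i}\bar U_i(\vlam_1\^t,\dots,\vlam_m\^t),\,\vlam_i\^t\rangle = -\langle\vl_i\^t,\vlam_i\^t\rangle$ and, by the same token, $\bar U_i(\vlam_1\^t,\dots,\vec{e}_{a_i^\star},\dots,\vlam_m\^t) = -\langle\vl_i\^t,\vec{e}_{a_i^\star}\rangle$. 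Hence the deviation gap for player $i$ equals
\[
    \frac{1}{T}\sum_{t=1}^T\langle\vl_i\^t,\vlam_i\^t\rangle - \frac{1}{T}\sum_{t=1}^T\langle\vl_i\^t,\vec{e}_{a_i^\star}\rangle.
\]

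Finally, I upper-bound this expression by $R_i^T/T$ using the definition of regret in \eqref{eq:def regret simplex}. Indeed, the inner minimum over $\hat{\vlam}\in\Delta(\cA_i)$ of the linear functional $\sum_t\langle\vl_i\^t,\hat{\vlam}\rangle$ is attained at a vertex of the simplex, and in particular is $\le \sum_t \langle\vl_i\^t,\vec{e}_{a_i^\star}\rangle$ for any pure $a_i^\star$. Taking the maximum over $i\in\range{m}$ and over $a_i^\star\in\cA_i$ yields $\epsilon = \max_i R_i^T/T$, proving the claim.

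I do not anticipate a serious obstacle: the entire argument is a bookkeeping exercise, and the only non-trivial ingredient is the multilinearity of $\bar U_i$, which lets the deviation-to-pure-action test be read off directly from the simplex-regret quantity $R_i^T$. The one subtlety worth flagging is that the regret in \eqref{eq:def regret simplex} compares against the best \emph{mixed} strategy in hindsight, while a CCE only requires robustness to \emph{pure} deviations; but since pure deviations are a subset of the feasible comparators for $R_i^T$, the direction of the inequality is the right one.
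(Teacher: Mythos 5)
Your proof is correct and is precisely the standard argument this theorem rests on (the paper itself states the result without proof, as a classical fact from the learning-in-games literature): multilinearity of $\bar U_i$ reduces the CCE deviation gap for player $i$ to $\frac{1}{T}\sum_{t=1}^T\langle\vl_i\^t,\vlam_i\^t-\vec{e}_{a_i^\star}\rangle$, which is at most $R_i^T/T$ because every pure deviation $\vec{e}_{a_i^\star}$ is a feasible comparator in \eqref{eq:def regret simplex}. Your closing remark about pure versus mixed deviations correctly identifies the only direction-of-inequality subtlety, so there is nothing to fix.
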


When each player $i$ learns under the COLS using OMWU with the same, constant learning rate $\eta_i\^t \defeq \eta$ as their learning algorithm $\cR_i$, the following strong properties hold for any NFG $\Gamma = (m, \{\cA_i\}, \{U_i\})$.
\begin{property}[Near-optimal per-player regret]\label{prop:omwu near optimal}
    There exist universal constants $C,C' > 1$ so that, for all $T$, if $\eta \le \frac{1}{Cm\log^4 T}$, the regret of each player $i\in\range{m}$ is bounded as
    $R_i^T \le \frac{\log |\cA_i|}{\eta} + C' \log T$
    \citep{Daskalakis21:Near}.
\end{property}
\begin{property}[Optimal regret sum]\label{prop:omwu optimal sum}
    If $\eta \le \frac{1}{\sqrt{8}(m-1)}$, at all times $T\in\Npp$ the sum of the players' regrets satisfies $\sum_{i=1}^m R_i^T \le \frac{m}{\eta} \max_{i=1}^m \log |\cA_i|$
    \citep{Syrgkanis15:Fast}.
\end{property}
When $\Gamma$ is a \emph{two-player zero-sum} game, the following also holds when learning under the COLS using OMWU.
\begin{property}[Last-iterate convergence]\label{prop:omwu last iterate}
    There exists a certain schedule of learning rates $\eta\^t_i$ such that the players' strategies $(\vlam_1\^t,\vlam_2\^t)$ converge to a Nash equilibrium of the game~\citep{Hsieh21:Adaptive}.
    Furthermore, if $\Gamma$ has a unique Nash equilibrium $(\vlam_1^*,\vlam_2^*)$ and each player uses any constant learning rate $\eta\^t_i \defeq \eta \le \frac{1}{8}$, at all times $t$ the strategy profile $(\vlam_1\^t,\vlam_2\^t)$ satisfies $\KL{\vlam_1^*}{\vlam_1\^t} + \KL{\vlam_2^*}{\vlam_2\^t} \leq C (1+C')^{-t}$, where the constants $C,\,C'$ only depend on the game, and $\KL{\cdot}{\cdot}$ denotes the KL-divergence between two distributions \citep{Wei21:Linear}.
\end{property}

\section{Multiplicative Weights in Polyhedral Convex Games}
\label{sec:vertex}

A powerful generalization of normal-form games is \emph{polyhedral convex games}, of which extensive-form games are an example~\citep{Gordon08:No}. Unlike NFGs, in which players select a mixed strategy from the probability simplex spanned by the set of available action $\cA_i$, in a polyhedral convex game the set of  ``randomized strategies'' from which each player $i\in\range{m}$ can draw is a given convex polytope $\Omega_i \subseteq \bbR^{d_i}$. Analogously to NFGs, we represent a polyhedral convex game as a tuple $\Gamma = (m, \{\Omega_i\}, \{\bar U_i\})$, where the functions $\bar U_i : \Omega_1\times\dots\times\Omega_m \to [0,1]$ are the \emph{multilinear} utility functions for each player $i\in\range{m}$.

The concepts of learning agents, equilibria, and COLS introduced in \cref{sec:online learning,sec:nfgs} can be directly extended to polyhedral convex games without difficulty, by simply replacing the set of mixed strategies $\Delta(\cA_i)$ of each player with their convex polyhedral counterpart $\Omega_i$.

Because the set of mixed strategies $\Omega$ of every player is a polytope, the decision problem of picking a mixed strategy $\vx\^t\in\Omega$ can be equivalently thought of as the decision problem of picking a convex combination $\vlam\^t \in \Delta(\cV_{\Omega})$ over the finite set of vertices $\cV_{\Omega}$ of $\Omega$. Indeed, it is not hard to show that a learning algorithm ${\cR}$ for $\Omega\subseteq\bbR^d$ can be constructed from \emph{any} learning algorithm $\tilde\cR$ for the set of \emph{vertices} $\cV_{\Omega}$, as we describe next. Let $\mat{V}$ denote the matrix whose columns are the vertices $\cV_\Omega$; then:
\begin{itemize}[nosep,left=0mm]
    \item whenever $\cR$ receives a prediction ${\vm}\^t\in\bbR^d$ (resp., loss ${\vl}\^t$), it computes the vector $\tilde{\vm}\^t\defeq\mat{V}^\top\vm\^t\in\bbR^{\cV_\Omega}$ (resp., $\tilde{\vl}\^t\defeq\mat{V}^\top\vl\^t$) and forwards it to $\tilde\cR$;
    \item whenever $\tilde\cR$ plays a new distribution $\vlam\^t \in \Delta(\cV_\Omega)$, the convex combination of vertices $\vx\^t\defeq \sum_{\vv\in\cV_\Omega} \vlam\^t[\vv]\,\vv = \mat{V}\vlam\^t$ is played by $\cR$.
\end{itemize}
It is immediate to verify that the regret cumulated by $\cR$ and $\tilde\cR$ is equal at all times $T$. So, as long as $\tilde\cR$ guarantees sublinear regret, then so does $\cR$. In this paper we are particularly interested in the algorithm obtained by using the above construction for the specific choice of OMWU as the algorithm $\tilde\cR$. We coin \emph{Vertex OMWU} the resulting learning algorithm $\cR$ in that case, depicted in \cref{fig:Vertex OMWU}. Let $\vl\^0,\vm\^0\defeq \vzero\in\bbR^{\cV_\Omega}$ and $\vlam\^0 \defeq \frac{1}{|\cV_\Omega|}\vone\in\Delta(\cV_\Omega)$; then, at all times $t\!\in\!\Npp$, Vertex OMWU updates the convex combination of vertices $\vlam\^{t-1}\!\in\!\Delta(\cV_\Omega)$ according to
\[
    \vlam\^t[\vv] \defeq \frac{\vlam\^{t-1}[\vv]\cdot e^{-\eta\^t\langle\vw\^{t},\vv\rangle}}{\sum_{\vv' \in \cV_\Omega} \vlam\^{t-1}[\vv']\cdot e^{-\eta\^t\langle \vw\^{t}\!,\vv'\rangle}},
    \numberthis[$\clubsuit$]{eq:vertex lam update}
\]
where
\[
    \vw\^t \defeq \vl\^{t-1} - \vm\^{t-1} + \vm\^t\in\bbR^d,
    \numberthis{eq:def w}
\]
and then outputs the iterate
\[
    \Omega\ni\vx\^t \defeq \sum_{\vv\in\cV_\Omega} \vlam\^t[\vv]\cdot\vv = \mat{V}\vlam\^t.
    \numberthis[$\spadesuit$]{eq:xt original}
\]
It is straightforward to show that Vertex OMWU satisfies \cref{prop:omwu near optimal,prop:omwu optimal sum,prop:omwu last iterate} with $|\cA_i|$ replaced with $|\cV_{\Omega_i}|$, by using a black-box reduction to NFGs. Indeed, let $\Gamma = (m, \{\Omega_i\},\{\bar U_i\})$ be a polyhedral convex game, and introduce the \emph{NFG $\tilde\Gamma$ equivalent to $\Gamma$}, defined as the NFG $\tilde\Gamma \defeq (m, \{\cV_{\Omega_i}\}, \{U_i\})$ where the action set of each player is the set of vertices $\cV_{\Omega_i}$, and $U_i(\vv_1, \dots, \vv_m) \defeq \bar U_i(\vv_1, \dots, \vv_m)$ for all $(\vv_1,\dots,\vv_m)\in\cV_{\Omega_1}\times\dots\times\cV_{\Omega_m}$. Consider the losses $\vl_i\^t$, predictions $\vm\^t$, and iterates $\vx_i\^t\in\Omega_i$ produced by agents learning (under the COLS) in $\Gamma$ using Vertex OMWU, and the losses $\tilde{\vl}_i\^t$, predictions $\tilde{\vm}_i\^t$, and iterates $\vlam_i\^t\in\Delta(\cV_i)$ produced by agents learning (again under the COLS) in $\tilde\Gamma$ using OMWU. For all players $i\in\range{m}$, it is immediate to verify by induction that the relationships (i) $\tilde{\vl}_i\^t = \mat{V}_i^\top \vl_i\^t$, (ii) $\tilde{\vm}_i\^t = \mat{V}_i^\top\vm_i\^t$, and (iii) $\vx_i\^t = \mat{V}_i \vlam_i\^t$ hold at all $t$, where $\mat{V}_i$ is the matrix whose columns are the vertices $\cV_{\Omega_i}$ (see also \cref{fig:Vertex OMWU}). The above discussion shows that in a precise sense, Vertex OMWU and OMWU are the same algorithm, just on different equivalent representations of the game.
Hence, the regret cumulated by each player $i$ in $\Gamma$ matches the regret cumulated by the same player in $\tilde\Gamma$, showing that \cref{prop:omwu near optimal,prop:omwu optimal sum} hold for Vertex OMWU.
Furthermore, whenever $\vlam_i\^t$ converges in iterates, then clearly so does $\vx_i\^t = \mat{V}_i\vlam_i\^t$, showing that \cref{prop:omwu last iterate} applies to Vertex OMWU as well.

The main drawback of Vertex OMWU is that it is not clear how to avoid a per-iteration complexity linear in the number of vertices of $\Omega$, which is typically exponential in $d$ (this is the case in extensive-form games). While different learning algorithms that guarantee polynomial per-iteration complexity in $d$ exist, none of them is known to guarantee near-optimal per-player regret (\cref{prop:omwu near optimal}) or last-iterate convergence (\cref{prop:omwu last iterate}) enjoyed by Vertex OMWU, much less all three \cref{prop:omwu near optimal,prop:omwu optimal sum,prop:omwu last iterate} at the same time. In the rest of the paper we fill this gap, by showing that in several cases of interest, Vertex OMWU can be implemented with polynomial-time (in $d$) iterations using a kernel trick.

\begin{figure}[t]\centering
    \tikzstyle{lbl} = [fill=white,rounded corners,inner ysep=.7mm]
    \tikzstyle{tightlbl} = [lbl,inner xsep=.3mm,inner ysep=.2mm]

    \scalebox{.97}{\begin{tikzpicture}[x=1mm,y=1mm]
            \begin{scope}[local bounding box=vertexbb]
                \node[text=gray,inner sep=0mm] at (-11, 10) {\small$\Gamma$};
                \draw[semithick,fill=white] (-12, 0) rectangle (12, -10) node[fitting node] (VertexOMWU) {};
                \draw[semithick,<-] (VertexOMWU.north) -- +(0,  4) node[above=-1mm,lbl,text width=6mm,align=center] (VertexLoss) {\raisebox{0mm}{\small${\vl}\^t$}};
                \node[above=-0.25mm of VertexLoss,lbl,text width=6mm,align=center] (VertexPred) {\raisebox{0mm}{\small${\vm}\^t$}};
                \draw[semithick,->] (VertexOMWU.south) -- +(0, -3) node[below,lbl] (VertexStrat) {\small$\vx\^t\in\Omega$};

                \node[text width=20mm, anchor=center, align=center] at (VertexOMWU.center) {\small Vertex OMWU\\\textcolor{black!60}{\eqref{eq:vertex lam update},~\eqref{eq:xt original}}};
            \end{scope}
            \begin{scope}[xshift=47mm,local bounding box=vanillabb]
                \node[text=gray,inner sep=0mm] at (11, 10) {\small$\tilde{\Gamma}$};
                \draw[semithick,fill=white] (-12, 0) rectangle (12, -10) node[fitting node] (VanillaOMWU) {};
                \draw[semithick,<-] (VanillaOMWU.north) -- +(0,  4) node[above=-1mm,lbl,text width=6mm,align=center] (VanillaLoss) {\raisebox{0mm}{\small$\tilde{\vl}\^t$}};
                \node[above=-0.25mm of VanillaLoss,lbl,text width=6mm,align=center] (VanillaPred) {\raisebox{0mm}{\small$\tilde{\vm}\^t$}};
                \draw[semithick,->] (VanillaOMWU.south) -- +(0, -3) node[below,lbl] (VanillaStrat) {\small$\vlam\^t\in\Delta(\cV_\Omega)$};

                \node[text width=18mm, anchor=center, align=center] at (VanillaOMWU.center) {\small OMWU\\\textcolor{black!60}{\eqref{eq:vanilla OMWU}}};
            \end{scope}
            \node[blue,tightlbl] (PredLbl) at ($(VertexPred)!.5!(VanillaPred)$) {\scalebox{.9}{\raisebox{1mm}{\small$\tilde{\vm}\^t = \mat{V}^\top{\vm}\^t$}}};
            \node[blue,tightlbl] (LossLbl) at ($(VertexLoss)!.5!(VanillaLoss)$) {\scalebox{.9}{\small$\tilde{\vl}\^t = \mat{V}^\top{\vl}\^t$}};
            \node[violet,tightlbl] (StratLbl) at ($(VertexStrat)!.5!(VanillaStrat)$) {\scalebox{.9}{\raisebox{1mm}{\small$\vx\^t = \mat{V}\vlam\^t$}}};
            \draw[line width=1mm,white] (VertexPred) -- (PredLbl) (PredLbl) -- (VanillaPred);
            \draw[blue] (VertexPred) -- (PredLbl) (PredLbl) edge[->] (VanillaPred);
            \draw[line width=1mm,white] (VertexLoss) -- (LossLbl) (LossLbl) -- (VanillaLoss);
            \draw[blue] (VertexLoss) -- (LossLbl) (LossLbl) edge[->] (VanillaLoss);
            \draw[line width=1mm,white] (VertexStrat) -- (StratLbl) (StratLbl) -- (VanillaStrat);
            \draw[violet] (VertexStrat) edge[<-] (StratLbl) (StratLbl) -- (VanillaStrat);

            \begin{pgfonlayer}{background}
                \filldraw[black!20,thin,fill=black!8] ($(vertexbb.south west) + (-1,-.5)$) rectangle ($(vertexbb.north east) + (1, .5)$);
                \filldraw[black!20,thin,fill=black!8] ($(vanillabb.south west) + (-1,-.5)$) rectangle ($(vanillabb.north east) + (1, .5)$);
                \node[inner ysep=.2mm,inner xsep=0mm,rotate=90,yshift=3mm] at (vertexbb.west) {\small Polyhedral convex game};
                \node[inner ysep=.2mm,inner xsep=0mm,rotate=-90,yshift=3mm] at (vanillabb.east) {\small Equivalent NFG};
            \end{pgfonlayer}
        \end{tikzpicture}}
    \caption{Construction of the Vertex OMWU algorithm. The matrix $\mat{V}$ has the (possibly exponentially-many) vertices $\cV_\Omega$ of the convex polytope $\Omega$ as columns.}
    \label{fig:Vertex OMWU}
\end{figure}
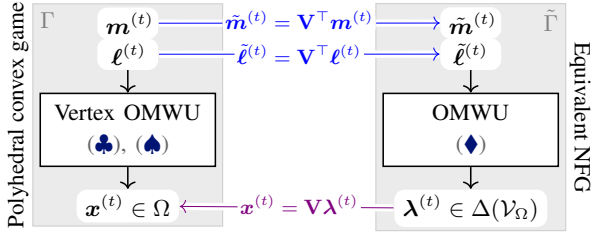

\section{Kernelized Multiplicative Weights Update}\label{sec:KOMWU}

In this section, we introduce \emph{Kernelized OMWU (KOMWU)}. Kernelized OMWU gives a way of efficiently simulating the Vertex OMWU algorithm described in \cref{sec:vertex} on polyhedral decision sets whose vertices have 0/1 integer coordinates, as long as a specific \emph{polyhedral kernel} function can be evaluated efficiently. We will assume that we are given a polytope $\Omega \subseteq \bbR^d$ with (possibly exponentially many) 0/1 integral vertices $\cV_\Omega \defeq \{\bv_1, \dots,\bv_{|\cV_\Omega|}\} \subseteq\{0,1\}^d$.
Furthermore, given a vertex $\vec{v}\in\cV_\Omega$, we will write $k \in \vec{v}$ as a shorthand for $\vec{v}[k] = 1$.

We define the \emph{0/1-polyhedral feature map} $\phi_\Omega : \bbR^d \to \bbR^{\cV_\Omega}$ associated with $\Omega$ as the function such that
\[
    \phi_\Omega(\vx)[\vv] \defeq \prod_{k \in \vv} \vx[k] \qquad\forall\,\vx \in \bbR^d, \vv \in \cV_\Omega.
    \numberthis{eq:phi Omega}
\]
Correspondingly, the \emph{0/1-polyhedral kernel} $K_\Omega$ associated with $\Omega$ is defined as the function $K_\Omega : \bbR^d \times \bbR^d \to \bbR$,
\[
    K_\Omega(\vx,\vy) \defeq \langle \phi_\Omega(\vx), \phi_\Omega(\vy) \rangle = \sum_{\vv \in \cV_\Omega} \prod_{k \in \vv} \vx[k] \, \vy[k]. \numberthis{eq:K Omega}
\]
We show that Vertex OMWU can be simulated using $d+1$ evaluation of the kernel $K_\Omega$ at every iteration. The key observation is summarized in the next theorem, which shows that the iterates $\vlam\^t$ produced by Vertex OMWU are highly structured, in the sense that they are always proportional to the feature mapping $\phi_\Omega(\vb\^t)$ for some $\vb\^t\in\bbR^d$.

\begin{theorem}\label{thm:bt}
    Consider the Vertex OMWU algorithm \eqref{eq:vertex lam update}, \eqref{eq:xt original}. At all times $t\ge 0$, the vector $\vb\^t \in \bbR^d$ defined as
    \[
        \vb\^t[k] \defeq \exp\mleft\{-\sum_{\tau=1}^t\eta\^\tau\,\vw\^\tau[k]\mright\}
        \numberthis{eq:bt}
    \]
    for all $k=1,\dots,d$, is such that
    \[
        \vlam\^t = \frac{ \phi_\Omega(\vb\^t) }{ K_\Omega(\vb\^t, \vone)}.\numberthis{eq:b ratio}
    \]
\end{theorem}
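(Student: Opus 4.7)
The plan is to prove the identity by straightforward induction on $t$, with the 0/1 structure of the vertices being the crucial ingredient that makes exponentials factor as products.

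For the base case $t=0$, I would plug in the definition: $\vb\^0[k] = \exp\{0\} = 1$ for every $k$, so $\vb\^0 = \vone$. Then $\phi_\Omega(\vone)[\vv] = \prod_{k\in\vv} 1 = 1$ for every vertex $\vv\in\cV_\Omega$, which makes $\phi_\Omega(\vone)$ the all-ones vector in $\bbR^{\cV_\Omega}$, and hence $K_\Omega(\vone,\vone) = |\cV_\Omega|$. The ratio is therefore $\frac{1}{|\cV_\Omega|}\vone$, which matches the stated initialization $\vlam\^0$.

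For the inductive step, assume the claim holds at time $t-1$, so $\vlam\^{t-1}[\vv] = \phi_\Omega(\vb\^{t-1})[\vv]/K_\Omega(\vb\^{t-1},\vone)$. The key observation, and really the heart of the argument, is that because every vertex $\vv$ lies in $\{0,1\}^d$, we can rewrite the exponential appearing in the Vertex OMWU update \eqref{eq:vertex lam update} as a product over the coordinates of $\vv$:
\[
    e^{-\eta\^t\langle \vw\^t,\vv\rangle} \;=\; e^{-\eta\^t \sum_{k\in\vv}\vw\^t[k]} \;=\; \prod_{k\in\vv} e^{-\eta\^t\vw\^t[k]}.
\]
Combined with the inductive hypothesis, the numerator of \eqref{eq:vertex lam update} (after cancelling the common factor $K_\Omega(\vb\^{t-1},\vone)$ between numerator and denominator) becomes
\[
    \prod_{k\in\vv}\vb\^{t-1}[k]\cdot\prod_{k\in\vv}e^{-\eta\^t\vw\^t[k]} \;=\; \prod_{k\in\vv}\bigl(\vb\^{t-1}[k]\,e^{-\eta\^t\vw\^t[k]}\bigr) \;=\; \prod_{k\in\vv}\vb\^t[k] \;=\; \phi_\Omega(\vb\^t)[\vv],
\]
where I used the telescoping relation $\vb\^t[k] = \vb\^{t-1}[k]\cdot e^{-\eta\^t\vw\^t[k]}$ that is immediate from definition \eqref{eq:bt}.

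It remains to recognize the denominator. Summing the numerator over all vertices gives $\sum_{\vv'\in\cV_\Omega}\phi_\Omega(\vb\^t)[\vv'] = \sum_{\vv'\in\cV_\Omega}\prod_{k\in\vv'}\vb\^t[k]\cdot 1$, which matches the definition \eqref{eq:K Omega} with the second argument set to $\vone$: this equals $K_\Omega(\vb\^t,\vone)$. Hence $\vlam\^t[\vv] = \phi_\Omega(\vb\^t)[\vv]/K_\Omega(\vb\^t,\vone)$, closing the induction. I do not expect a real obstacle here — the only subtlety is being explicit that the 0/1 property of the vertices is what licenses turning a sum inside the exponential into a product of per-coordinate exponentials, and that this product is exactly the feature map $\phi_\Omega$ evaluated at the running product of per-coordinate exponentiated losses.
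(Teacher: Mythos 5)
Your proof is correct and follows essentially the same route as the paper's: induction on $t$, with the base case reducing to the uniform distribution and the inductive step hinging on factoring $e^{-\eta\^t\langle\vw\^t,\vv\rangle}$ as $\prod_{k\in\vv}e^{-\eta\^t\vw\^t[k]}$ via the 0/1 integrality of the vertices, then absorbing this into the telescoping update $\vb\^t[k]=\vb\^{t-1}[k]e^{-\eta\^t\vw\^t[k]}$. No gaps.
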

\begin{proof}%
    By induction.
    \begin{itemize}[nosep,leftmargin=5mm]
        \item At time $t = 0$, the vector $\vb\^0$ is $\vb\^0 = \vone \in \bbR^d$. By definition of the feature map~\eqref{eq:phi Omega}, $\phi_\Omega(\vone) = \vone \in \bbR^{\cV_\Omega}$. So, $K_\Omega(\vb\^0,\vone) = \sum_{\vv\in\cV_\Omega} 1 = |\cV_\Omega|$ and hence the right-hand side of~\eqref{eq:b ratio} is $\frac{1}{|\cV_\Omega|}\vec{1}$, which matches $\vlam\^0$ produced by Vertex OMWU, as we wanted to show.
        \item Assume the statement holds up to some time $t-1 \ge 0$. We will show that it holds at time $t$ as well.
              Since $\bv$ has integral 0/1 coordinates, we can write
              \[
                  \exp\{-\eta\^t\langle\vw\^{t},\vv\rangle\} &= \exp\mleft\{
                  -\eta\^t\,\sum_{k\in\vv} \vw\^t[k]
                  \mright\}\\
                  &= \prod_{k\in\vv} \exp\{-\eta\^t\,\vw\^t[k]\}.
                  \numberthis{eq:exp inner}
              \]
              From the inductive hypothesis and~\eqref{eq:phi Omega}, for all $\vv\in\cV_\Omega$,
              \[
                  \vlam\^{t-1}[\vv] &= \frac{\phi_\Omega(\vb\^{t-1})[\vv]}{K_\Omega(\vb\^{t-1},\vone)}
                  = \frac{\prod_{k\in\vv}\vb\^{t-1}[k]}{K_\Omega(\vb\^{t-1},\vone)}. \numberthis{eq:inductive hyp}
              \]
              Plugging~\eqref{eq:exp inner} and~\eqref{eq:inductive hyp} into~\eqref{eq:vertex lam update}, we have the inductive step
              \[
                  \vlam\^{t}[\vv] &= \frac{
                  \prod_{k\in\vv}\vb\^{t-1}[k]\exp\{-\eta\^t\,\vw\^t[k]\}
                  }{
                  \sum_{\vv\in\cV_\Omega}\prod_{k\in\vv}\vb\^{t-1}[k]\exp\{-\eta\^t\,\vw\^t[k]\}
                  }\\
                  &= \frac{\phi_\Omega(\vb\^{t})[\vv]}{K_\Omega(\vb\^{t}, \vone)}
              \]
              for all $\vv \in \cV_\Omega$, where in the last step we used the fact that $\vb\^t[k] = \vb\^{t-1}[k]\exp\{-\eta\^t\,\vw\^t[k]\}$ by~\eqref{eq:bt}. %
              \qedhere
    \end{itemize}
\end{proof}

The structure of $\vlam\^t$ uncovered by \cref{thm:bt} can be leveraged to compute the iterate $\vx\^t$ produced by Vertex OMWU, \ie the convex combination of the vertices
\eqref{eq:xt original},
using $d+1$ evaluations of the kernel $K_\Omega$. We do so by extending an idea of \citet[eq.~5.2]{Takimoto03:Path}.

\begin{theorem}\label{thm:bt to xt}
    Let $\vb\^t$ be as in \cref{thm:bt}. For each $h =1,\dots,d$, let $\ebar_h \in \bbR^d$ be defined as the indicator vector
    \[
        \ebar_h[k] \defeq \bbone_{k\neq h} \defeq \begin{cases} 0 & \text{if } k = h\\ 1 & \text{if } k \neq h.\end{cases}
        \numberthis{eq:ebar}
    \]
    Then, at all $t \ge 1$, the iterate $\vx\^t\!\in\!\Omega$ produced by Vertex OMWU can be written as
    \[
        \vx\^t \! = \! \mleft(\!
        1 - \frac{K_\Omega(\vb\^t, \ebar_1)}{K_\Omega(\vb\^t, \vone)}, \dots,
        1 - \frac{K_\Omega(\vb\^t, \ebar_d)}{K_\Omega(\vb\^t, \vone)}
        \!\mright).\numberthis{eq:xt}
    \]
\end{theorem}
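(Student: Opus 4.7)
The plan is to start from the definitions and exploit the formula for $\vlam\^t$ already proven in \cref{thm:bt}, reducing the desired identity to a simple counting/splitting argument on vertices of $\Omega$. By definition~\eqref{eq:xt original}, the $h$-th coordinate of $\vx\^t$ is $\vx\^t[h] = \sum_{\vv\in\cV_\Omega} \vlam\^t[\vv]\,\vv[h]$. Substituting $\vlam\^t = \phi_\Omega(\vb\^t)/K_\Omega(\vb\^t,\vone)$ and expanding the feature map via~\eqref{eq:phi Omega} yields
\[
\vx\^t[h] \;=\; \frac{1}{K_\Omega(\vb\^t,\vone)} \sum_{\vv\in\cV_\Omega} \vv[h]\prod_{k\in\vv} \vb\^t[k].
\]
So it suffices to show that the numerator equals $K_\Omega(\vb\^t,\vone) - K_\Omega(\vb\^t,\ebar_h)$, since then the stated formula follows after dividing and simplifying $1 - K_\Omega(\vb\^t,\ebar_h)/K_\Omega(\vb\^t,\vone)$.

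The central observation, which drives the argument, is that because the vertices are $0/1$-valued, the product $\prod_{k\in\vv}\ebar_h[k]$ acts as an indicator for whether $h$ is absent from $\vv$. Concretely, $\ebar_h[k] = 1$ for all $k \ne h$ and $\ebar_h[h] = 0$, so $\prod_{k\in\vv}\ebar_h[k] = \bbone_{h \notin \vv}$. Plugging into the definition of $K_\Omega$ in~\eqref{eq:K Omega} gives
\[
K_\Omega(\vb\^t,\ebar_h) \;=\; \sum_{\vv\in\cV_\Omega} \bbone_{h\notin\vv}\prod_{k\in\vv}\vb\^t[k] \;=\; \sum_{\vv\in\cV_\Omega:\,h\notin\vv}\prod_{k\in\vv}\vb\^t[k].
\]
Analogously, $K_\Omega(\vb\^t,\vone)$ is the total sum over all vertices, so splitting that sum into the contributions of vertices containing $h$ and those not containing $h$ gives $K_\Omega(\vb\^t,\vone) - K_\Omega(\vb\^t,\ebar_h) = \sum_{\vv:\,h\in\vv}\prod_{k\in\vv}\vb\^t[k]$.

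Finally, since $\vv[h]\in\{0,1\}$, the factor $\vv[h]$ in the numerator of $\vx\^t[h]$ above simply restricts the sum to vertices with $h\in\vv$, so the numerator equals exactly $K_\Omega(\vb\^t,\vone) - K_\Omega(\vb\^t,\ebar_h)$. Combining the two expressions and simplifying yields
\[
\vx\^t[h] \;=\; 1 - \frac{K_\Omega(\vb\^t,\ebar_h)}{K_\Omega(\vb\^t,\vone)},
\]
as claimed. There is no real obstacle here: the argument is a direct computation once \cref{thm:bt} is in hand, and the only subtlety is recognizing that $\ebar_h$ was designed precisely so that $\prod_{k\in\vv}\ebar_h[k]$ selects the vertices missing coordinate $h$, which lets a single extra kernel evaluation (per coordinate) extract marginals from the vertex-distribution $\vlam\^t$.
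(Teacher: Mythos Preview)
Your proposal is correct and follows essentially the same approach as the paper: both arguments hinge on the observation that $\prod_{k\in\vv}\ebar_h[k]=\bbone_{h\notin\vv}$, then use \cref{thm:bt} to substitute $\vlam\^t=\phi_\Omega(\vb\^t)/K_\Omega(\vb\^t,\vone)$ and the $0/1$-integrality of vertices to convert $\vv[h]$ into the indicator $\bbone_{h\in\vv}$, arriving at the difference $K_\Omega(\vb\^t,\vone)-K_\Omega(\vb\^t,\ebar_h)$ in the numerator. The paper phrases the same computation slightly more compactly via inner products of feature maps, but the content is identical.
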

\begin{proof}%
    The proof crucially relies on the observation that for all $h=1,\dots,d$, the feature map $\phi_\Omega(\ebar_h)$ satisfies
    \[
        \phi_\Omega(\ebar_h)[\vv] = \prod_{k\in\vv} \ebar_h[k]
        = \prod_{k\in\vv}\bbone_{k\neq h} = \bbone_{h\notin \vv},
        \quad \forall\,\vv\in\cV_\Omega.
    \]
    Using the fact that $\phi_\Omega(\vone) = \vone$, we conclude that
    \[
        \phi_\Omega(\vone)[\vv] - \phi_\Omega(\ebar_h)[\vv] = \bbone_{h \in \vv}, \quad\forall\, h = 1,\dots,d.\numberthis{eq:diff phi}
    \]
    Therefore, for all $k = 1,\dots,d$, we obtain
    \[
        \vx\^t[k] &\overset{\mathclap{\eqref{eq:xt original}}}{=} \sum_{\vv\in\cV_\Omega} \vlam\^t[\vv]\cdot\vv[k] = \sum_{\vv\in\cV_\Omega} \vlam\^t[\vv]\cdot\bbone_{k\in\vv}\\
        &= \sum_{\vv\in\cV_\Omega} \vlam\^t[\vv]\cdot(\phi_\Omega(\vone)[\vv] - \phi_\Omega(\ebar_k)[\bv])\\
        &= \frac{\langle\phi_\Omega(\vb\^t),\phi_\Omega(\vone)\rangle - \langle\phi_\Omega(\vb\^t), \phi_\Omega(\ebar_k)\rangle}{K_\Omega(\vb\^t,\vone)}\\
        &= \frac{K_\Omega(\vb\^t\!,\vone) \!-\! K_\Omega(\vb\^t\!, \ebar_k)}{K_\Omega(\vb\^t,\vone)} = 1 \!-\! \frac{K_\Omega(\vb\^t\!, \ebar_k)}{K_\Omega(\vb\^t,\vone)},
    \]
    where the second equality follows from the integrality of $\vv \in \cV_\Omega$, the third from \eqref{eq:diff phi}, the fourth from \cref{thm:bt}, and the fifth from the definition of
    $K_\Omega$ %
    \eqref{eq:K Omega}.
\end{proof}

Combined, \cref{thm:bt,thm:bt to xt} suggest that by keeping track of the vectors $\vb\^t$ instead of $\vlam\^t$, updating them using \cref{thm:bt} and reconstructing the iterates $\vx\^t$ using \cref{thm:bt to xt}, Vertex OMWU can be simulated efficiently. We call the resulting algorithm, given in \cref{algo:kernelized OMWU}, \emph{Kernelized OMWU (KOMWU)}. Similarly, we call \emph{Kernelized MWU} the non-optimistic version of KOMWU obtained as the special case in which $\vm\^t = \vzero$ at all $t$. In light of the preceding discussion, we have the following.

\begin{theorem}\label{thm:kernel omwu equivalent}
    Kernelized OMWU produces the same iterates $\vec{x}\^t$ as Vertex OMWU when it receives the same sequence of predictions $\vec{m}\^t$ and losses $\vec{\ell}\^t\in\bbR^d$. Furthermore, each iteration of KOMWU runs in time proportional to the time required to compute the $d+1$ kernel evaluations $\{K_\Omega(\vb\^t, \vone), K_\Omega(\vb\^t, \ebar_1), \dots, K_\Omega(\vb\^t,\ebar_d)\}$.
\end{theorem}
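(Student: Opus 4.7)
The plan is to assemble the two preceding theorems into the announced equivalence and then tally the per-iteration cost. The conceptual content has already been done: \cref{thm:bt} shows that the Vertex OMWU iterate $\vlam\^t \in \Delta(\cV_\Omega)$—which lives in a space of potentially exponential dimension—always admits a concise $d$-dimensional representation via $\vb\^t \in \bbR^d$, and \cref{thm:bt to xt} converts this representation into $\vx\^t = \mat{V}\vlam\^t$ using only $d+1$ evaluations of $K_\Omega$. KOMWU is precisely the algorithm that tracks $\vb\^t$ instead of $\vlam\^t$.

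\textbf{Equivalence of iterates.} I would argue by induction on $t$. At $t=0$ both algorithms initialize identically: since $\vl\^0 = \vm\^0 = \vzero$, definition~\eqref{eq:bt} gives $\vb\^0 = \vone$, which is what KOMWU stores. For the inductive step, the vector $\vb\^t$ defined by \cref{thm:bt} admits the one-step multiplicative recursion
\[
    \vb\^t[k] \;=\; \vb\^{t-1}[k] \cdot \exp\{-\eta\^t\,\vw\^t[k]\}, \qquad k = 1,\dots,d,
\]
which is exactly the update KOMWU performs after forming the shared vector $\vw\^t = \vl\^{t-1} - \vm\^{t-1} + \vm\^t$ from the inputs received at step $t$. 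Once $\vb\^t$ has been updated in this way, \cref{thm:bt to xt} expresses $\vx\^t$ entirely through the $d+1$ kernel values $K_\Omega(\vb\^t,\vone)$ and $\{K_\Omega(\vb\^t,\ebar_h)\}_{h=1}^d$ via formula~\eqref{eq:xt}, and this is exactly what KOMWU returns. Since Vertex OMWU, given the same sequence of $(\vl\^t, \vm\^t)$, produces the iterate characterized by \cref{thm:bt,thm:bt to xt}, the two algorithms output the same $\vx\^t$ in exact arithmetic.

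\textbf{Per-iteration complexity.} A single KOMWU iteration performs three operations: (i) an $O(d)$ pass to compute $\vw\^t$ and then to update $\vb\^t$ coordinate-wise from $\vb\^{t-1}$ by scalar exponentials; (ii) the $d+1$ kernel evaluations $K_\Omega(\vb\^t,\vone)$ and $K_\Omega(\vb\^t,\ebar_h)$ for $h = 1, \dots, d$; and (iii) an $O(d)$ assembly of $\vx\^t$ by applying the ratio in~\eqref{eq:xt}. Any single kernel evaluation must at least read its $d$-dimensional inputs, so the $O(d)$ overhead in (i) and (iii) is subsumed by the cost of (ii), and the total per-iteration work is proportional to that of the $d+1$ kernel evaluations listed in the theorem statement.

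\textbf{Main difficulty.} There is essentially no technical obstacle here: once \cref{thm:bt,thm:bt to xt} are established, the proof is bookkeeping that glues together the one-step multiplicative update for $\vb\^t$ with the kernelized reconstruction of $\vx\^t$. The only subtlety worth flagging is that the equivalence is stated for exact arithmetic and relies on $\vb\^t$ growing or shrinking multiplicatively across rounds; in a practical implementation one would work in log-space or periodically renormalize, but this does not affect $\vx\^t$ since \eqref{eq:xt} is invariant under positive rescaling of $\vb\^t$.
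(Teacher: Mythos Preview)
Your proposal is correct and follows exactly the paper's approach: the paper itself does not give a standalone proof of this theorem but simply states that it follows ``in light of the preceding discussion,'' namely \cref{thm:bt,thm:bt to xt} combined with the description of \cref{algo:kernelized OMWU}. Your write-up merely spells out that bookkeeping in more detail than the paper does.
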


\begin{figure}[t]\centering
    \makeatletter
    \newcommand{\removelatexerror}{\let\@latex@error\@gobble} %
    \makeatother
    \removelatexerror
    \scalebox{.95}{\begin{algorithm}[H]
            \caption{Kernelized OMWU (KOMWU)}
            \label{algo:kernelized OMWU}
            \DontPrintSemicolon
            $\vl\^0,~\vm\^0,~\vs\^0 \gets \vzero\in\bbR^d$\Comment*{\color{commentcolor}Initialization]\!\!\!\!}
            \For{$t=1,2,\dots$}{\vspace{-.5mm}
            \textbf{receive} prediction $\vm\^t \in \bbR^d$ of next loss\;
            \Comment{\color{commentcolor}set $\vec{m}\^t = \vec{0}$ for non-predictive variant]}
            \vspace{-1mm}\Hline{}\vspace{-.5mm}
            \Comment{\color{commentcolor}Compute $\vb\^t$ according to \cref{thm:bt}]}\vspace{.0mm}
            $\vw\^t \gets \vl\^{t-1} - \vm\^{t-1}+\vm\^t$\;
            $\vs\^t \gets \vs\^{t-1} + \eta\^t\vw\^t$\Comment*{\color{commentcolor}{\small$\vs\^t = \sum \eta\^\tau\vw\^\tau$}]\!\!\!\!}
            \For{$k=1,\dots,d$}{
            $\vb\^t[k]\gets \exp\{-\vs\^t[k]\}$\Comment*{\color{commentcolor}see \eqref{eq:bt}]\!\!\!\!}
            }
            \vspace{-1mm}\Hline{}\vspace{-.5mm}
            \Comment{\color{commentcolor}Produce iterate $\vx\^t$ according to \cref{thm:bt to xt}]\!\!\!\!}\vspace{.0mm}
            $\vx\^t \gets \vec{0} \in \bbR^d$\;
            $\alpha\gets K_\Omega(\vb\^t, \vone)$\Comment*{\color{commentcolor}$K_\Omega$ is defined in \eqref{eq:K Omega}]\!\!\!\!}
            \For{$k=1,\dots,d$}{
            $\vx\^t[k] \gets 1 - K_\Omega(\vb\^t, \ebar_k) \,/\, \alpha$\Comment*{\color{commentcolor}see \eqref{eq:xt}]\!\!\!\!}
            }
            \textbf{output} $\vx\^t \in \Omega$ and
            \textbf{receive} loss vector $\vl\^t \in \bbR^d$\!\!\!\!\;
            }\vspace{-1mm}
        \end{algorithm}}
\end{figure}
\section{KOMWU in Extensive-Form Games}\label{sec:kernel efgs}

In this section, we show how the general theory we developed in \cref{sec:kernel efgs}
applies to extensive-form game, \ie tree-form games that incorporate sequential and simultaneous moves, and imperfect information. The central result of this section, \cref{thm:KOMWU in EFGs}, shows that OMWU on the normal-form representation of any EFG can be simulated in linear time in the game tree size via KOMWU, contradicting the popular wisdom that working with the normal form of an extensive-form game is intractable.

\subsection{Preliminaries on Extensive-Form Games}\label{sec:efg notation}

We now briefly recall standard concepts and notation about extensive-form games which we use in the rest of the section. More details and an example are available in \cref{app:efgs}.

In an $m$-player perfect-recall extensive-form game, each player $i\in\range{m}$ faces a tree-form sequential decision problem (TFSDP). In a TFSDP, the player interacts with the environment in two ways: at \emph{decision points}, the
agent must act by picking an action from a set of legal actions; at
\emph{observation points}, the agent observes a signal drawn from a set of
possible signals. We denote the set of decision points of player~$i$ as $\cJ_i$. The set of actions available at decision point $j\in\cJ_i$ is denoted $A_j$. A pair $(j,a)$ where $j\in \cJ_i$ and $a \in A_j$ is called a \emph{non-empty sequence}. The set of all non-empty sequences of player~$i$ is denoted as $\Sigma^*_i \defeq \{(j,a): j\in\cJ, a\in A_j\}$.
For notational convenience, we will often denote an element $(j,a)$ in
$\Sigma_i^*$ as $ja$ without using parentheses. Given a decision point $j \in \cJ_i$, we denote by $p_j$ its
\emph{parent sequence}, defined as the last sequence (that is, decision
point-action pair) encountered on the path from the root of the
decision process to $j$.
If the agent does not act before $j$ (that is, $j$ is the root of the
process or only observation points are encountered on the path from
the root to $j$), we let $p_j$ be set to the special element $\emptyseq$, called the \emph{empty sequence}. We let $\Sigma_i \defeq \Sigma_i^* \cup \{\emptyseq\}$. Given a $\sigma \in \Sigma_i$, we let $\mathcal{C}_\sigma \defeq \{j\in\cJ_i: p_j = \sigma\}$.

An $m$-player extensive-form game is a polyhedral convex game (\cref{sec:vertex}) $\Gamma = (m, \{Q_i\}, \{U_i\})$, where the convex polytope of mixed strategies $Q_i$ of each player $i\in\range{m}$ is called a \emph{sequence-form strategy space} \citep{Romanovskii62:Reduction,Stengel96:Efficient,Koller96:Efficient}, and is defined as
\newcommand*\circled[1]{%
    \tikz[baseline=(C.base)]\node[draw,circle,inner sep=0.8pt](C) {\small #1};\!%
}
\[
    Q_i \defeq \mleft\{\vx \in \bbR^{\Sigma_i}: \!\!\begin{array}{l} \circled{1}~~ \vx[\emptyseq] = 1, \\[1mm] \circled{2}~~ \vx[p_j] = \sum_{a\in A_j} \!\vx[ja] ~~\forall j\in\cJ_i \end{array}\!\!\!\mright\}.
\]

It is known that the set of vertices of $Q_i$ are the \emph{deterministic} sequence-form strategies $\Pi_i \defeq Q_i \cap \{0,1\}^{\Sigma_i}$.
We mention the following result (see \cref{app:proofs}).
\begin{restatable}{proposition}{propnumvertices}\label{prop:efg vertex count}
    The number of vertices of $Q_i$ is upper bounded by $A^{\|Q_i\|_1}$, where $A \!\defeq\! \max_{j\in\cJ_i} |A_j|$ is the largest number of possible actions, and $\|Q_i\|_1 \defeq \max_{\vec{q}\in Q_i}\|\vec{q}\|_1$.
\end{restatable}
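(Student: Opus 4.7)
The plan is to count the deterministic sequence-form strategies in $\Pi_i$ by a bottom-up induction on the decision-point tree, using the fact that such a strategy is determined by selecting one action at each decision point it actually reaches.

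First I would rewrite $\|Q_i\|_1$ in terms of reached decision points. For any $\vec{q}\in Q_i$, summing the flow constraints yields $\|\vec{q}\|_1 = \vec{q}[\emptyseq] + \sum_{j\in\cJ_i}\sum_{a\in A_j}\vec{q}[ja] = 1 + \sum_{j\in\cJ_i}\vec{q}[p_j]$. Since $\|\cdot\|_1$ is linear on the nonnegative polytope $Q_i$, its maximum is attained at a vertex $\vec{q}\in\Pi_i$, and on such vertices $\vec{q}[p_j]\in\{0,1\}$ indicates whether decision point $j$ is reached. Hence $\|Q_i\|_1 - 1$ equals the maximum, over deterministic strategies, of the number of reached decision points.

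Next I would establish the inductive bound. For each $j\in\cJ_i$, let $N_j$ denote the number of deterministic strategies of the subtree rooted at $j$ (assuming $j$ is reached) and let $r_j$ denote the maximum number of reached decision points (including $j$) in that subtree. By perfect recall, the subtrees below the children $\mathcal{C}_{ja}$ of each action $a\in A_j$ are independent, so
\[
N_j = \sum_{a\in A_j}\prod_{j'\in\mathcal{C}_{ja}} N_{j'}, \qquad r_j = 1 + \max_{a\in A_j}\sum_{j'\in\mathcal{C}_{ja}} r_{j'}.
\]
I would prove $N_j\le A^{r_j}$ by induction: the base case of leaf decision points is $N_j = |A_j|\le A = A^1 = A^{r_j}$; for the inductive step, the inductive hypothesis gives $\prod_{j'\in\mathcal{C}_{ja}} N_{j'} \le A^{\sum_{j'} r_{j'}} \le A^{r_j - 1}$ for every action $a$, so $N_j \le |A_j|\cdot A^{r_j-1}\le A^{r_j}$.

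Finally, the total number of vertices factors over the root-level decision points as $|\Pi_i| = \prod_{j\in\mathcal{C}_\emptyseq} N_j \le A^{\sum_{j\in\mathcal{C}_\emptyseq} r_j} = A^{\|Q_i\|_1 - 1}\le A^{\|Q_i\|_1}$, where the penultimate equality uses that $\sum_{j\in\mathcal{C}_\emptyseq} r_j$ is precisely the maximum number of reached decision points. The only real obstacle is the inductive step, where one must notice that the sum $\sum_{a}A^{r_{j,a}}$ is dominated by $|A_j|\cdot A^{\max_a r_{j,a}} = |A_j|\cdot A^{r_j-1}$, so that the $|A_j|\le A$ factor exactly absorbs the drop from $r_j$ to $r_j-1$.
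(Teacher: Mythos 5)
Your proof is correct and follows essentially the same route as the paper's: a bottom-up induction on the decision tree in which a decision point contributes $\sum_a \prod_{j'} N_{j'} \le |A_j|\cdot A^{\max_a\sum_{j'}r_{j'}} \le A^{r_j}$ and independent subtrees multiply, with your quantity $r_j$ playing exactly the role of $\|Q_{i,j}\|_1$ in the paper's argument. Your explicit accounting of the identity $\|\vec{q}\|_1 = 1 + \sum_{j}\vec{q}[p_j]$ is a nice touch that the paper leaves implicit, and it even yields the marginally tighter bound $A^{\|Q_i\|_1-1}$.
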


We will often need to describe strategies for \emph{subtrees} of the TFDSM faced by each player $i$. We use the notation $j' \succeq j$ to denote the fact that $j'\in\cJ_i$ is a descendant of $j\in\cJ_i$, and $j'\succ j$ to denote a strict descendant (\ie $j'\succeq j \land j'\neq j$). For any $j\in\cJ_i$ we let $\Sigma^*_{i,j} \defeq \{j'a': j' \succeq j, a' \in A_{j'}\}$ denote the set of non-empty sequences in the subtree rooted at $j$. The set of sequence-form strategies for that subtree $j$ is defined as the convex polytope
\[
    Q_{i,j} \!\defeq\! \mleft\{\!\vx \in \bbR^{\Sigma^*_{i,j}}\!: \!\!\!\begin{array}{l} \circled{1}~~ \sum_{a\in A_j}\vx[ja] = 1, \\[1mm] \circled{2}~~ \vx[p_{j'}] \!=\! \sum_{a\in A_{j'}} \!\vx[j'a] ~~\forall j'\succ j\end{array}\!\!\!\mright\}.
\]
Correspondingly, we let $\Pi_{i,j} \defeq Q_{i,j} \cap \{0,1\}^{\Sigma^*_{i,j}}$ denote the set of vertices of $Q_{i,j}$, each of which is a deterministic sequence-form strategy for the subtree rooted at $j$.

\subsection{Linear-time Implementation of KOMWU}

For any player $i$, the 0/1-polyhedral kernel $K_{Q_i}$ associated with the player's sequence-form strategy space $Q_i$ can be evaluated in linear time in the number of sequences $|\Sigma_i|$ of that player. To do so, we introduce a \emph{partial kernel function} $K_j: \bbR^{\Sigma_i}\times\bbR^{\Sigma_i} \to \bbR$ for every decision point $j\in\cJ_i$, %
\[
    K_j(\vec{x}, \vec{y}) \defeq \sum_{\vec{\pi} \in \Pi_{i,j}}\prod_{\sigma \in \vec{\pi}} \vec{x}[\sigma]\,\vec{y}[\sigma].
    \numberthis{eq:def Kj}
\]

\begin{restatable}{theorem}{thmefgkernel}\label{thm:efg kernel computation}
    For any vectors $\vec{x},\vec{y} \in\bbR^{\Sigma_i}$, the two following recursive relationships hold:
    \[
        K_{Q_i}(\vec{x}, \vec{y}) &= \vec{x}[\emptyseq]\,\vec{y}[\emptyseq]\prod_{j \in \mathcal{C}_\emptyseq} K_j(\vec{x},\vec{y}),
        \numberthis{eq:efg kernel computation}
    \]
    and, for all decision points $j\in\cJ_i$,
    \[
        K_j(\vec{x}, \vec{y}) &=\!\sum_{a\in A_j} \mleft(\vec{x}[ja]\,\vec{y}[ja]\prod_{j' \in \mathcal{C}_{ja}} K_{j'}(\vec{x}, \vec{y})\mright).\numberthis{eq:efg kernel computation 2}
    \]
    In particular, \cref{eq:efg kernel computation,eq:efg kernel computation 2} give a recursive algorithm to evaluate the polyhedral kernel $K_{Q_i}$ associated with the sequence-form strategy space of any player $i$ in an EFG in linear time in the number of sequences $|\Sigma_i|$.
\end{restatable}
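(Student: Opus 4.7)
The plan is to prove the two recursive identities by carefully enumerating the structure of deterministic sequence-form strategies, and then deduce the linear time complexity from a standard counting argument.

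The main idea is that any deterministic strategy $\vec{\pi} \in \Pi_{i,j}$ for the subtree rooted at $j$ is in bijection with a choice of (a) an action $a \in A_j$ (the unique action such that $\vec{\pi}[ja] = 1$), together with (b) a tuple of deterministic strategies $(\vec{\pi}_{j'})_{j' \in \mathcal{C}_{ja}}$, one for each immediate child decision point that becomes reachable, where $\vec{\pi}_{j'} \in \Pi_{i,j'}$. Crucially, for any $a' \in A_j$ with $a' \neq a$, all sequences in the subtree below $ja'$ receive mass $0$ and therefore do not contribute to the product $\prod_{\sigma \in \vec{\pi}} \vec{x}[\sigma]\,\vec{y}[\sigma]$ (recall the shorthand $\sigma \in \vec{\pi} \iff \vec{\pi}[\sigma] = 1$). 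Hence, the product factorizes as
\[
    \prod_{\sigma \in \vec{\pi}} \vec{x}[\sigma]\,\vec{y}[\sigma] = \vec{x}[ja]\,\vec{y}[ja]\prod_{j' \in \mathcal{C}_{ja}} \prod_{\sigma \in \vec{\pi}_{j'}} \vec{x}[\sigma]\,\vec{y}[\sigma].
\]
Summing over $\vec{\pi} \in \Pi_{i,j}$ by first branching on $a$, then swapping sum and product over independent sub-choices (the standard ``sum-product decomposition'' for a product over independent index sets), delivers \eqref{eq:efg kernel computation 2}. Equation \eqref{eq:efg kernel computation} follows from the very same argument applied at the root: every $\vec{\pi} \in \Pi_i$ satisfies $\vec{\pi}[\emptyseq] = 1$ by constraint \circled{1} of $Q_i$, and the immediate children in $\mathcal{C}_\emptyseq$ are all always reached (there is no action to branch on at the empty sequence), so $\vec{\pi}$ restricted to the subtree at $j \in \mathcal{C}_\emptyseq$ ranges independently over $\Pi_{i,j}$.

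The one subtle point that deserves care is justifying that the restrictions of $\vec{\pi}$ to the different subtrees $j' \in \mathcal{C}_{ja}$ range \emph{independently} over $\Pi_{i,j'}$. This requires checking that the two sets of constraints defining $\Pi_{i,j}$ (with $\vec{\pi}[ja] = 1$ fixed) decouple into the conjunction of the constraints defining each $\Pi_{i,j'}$, which in turn rests on the fact that in a perfect-recall EFG the parent-sequence function induces a tree structure on $\cJ_i$.

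Finally, to verify linear-time evaluation, I would observe that the recursion~\eqref{eq:efg kernel computation 2} computes each partial kernel $K_j$ exactly once. The work spent at $j$ is $O(\sum_{a \in A_j}(1 + |\mathcal{C}_{ja}|))$, so the total work summed over all $j \in \cJ_i$ is $O\bigl(|\Sigma_i^*| + \sum_{\sigma \in \Sigma_i^*} |\mathcal{C}_\sigma|\bigr)$. Since every decision point appears as a child of exactly one sequence (its parent sequence), $\sum_{\sigma} |\mathcal{C}_\sigma| = |\cJ_i| \le |\Sigma_i|$, yielding $O(|\Sigma_i|)$ overall. I do not anticipate any real obstacle; the only thing to be mindful of is that \eqref{eq:def Kj} uses the convention $\sigma \in \vec{\pi}$ to mean $\vec{\pi}[\sigma] = 1$, so unreached subtrees genuinely do not appear in the product—without this, the factorization would fail.
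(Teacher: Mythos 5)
Your proposal is correct and follows essentially the same route as the paper: the paper formalizes your ``bijection'' as an explicit characterization of $\Pi_{i,j}$ as a disjoint union over $a^*\in A_j$ of Cartesian products $\bigtimes_{j'\in\mathcal{C}_{ja^*}}\Pi_{i,j'}$ (with zero subvectors on unchosen branches), and then performs the same sum--product interchange you describe, including the observation that unreached subtrees drop out of the product because of the $\sigma\in\vec{\pi}\iff\vec{\pi}[\sigma]=1$ convention. Your accounting of the linear-time claim is in fact slightly more explicit than the paper's, which simply asserts it from the recursion.
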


\cref{thm:efg kernel computation} shows that the kernel $K_{Q_i}$ can be evaluated in linear time (in $|\Sigma_i|$) at any $(\vx,\vy)$. So, the KOMWU algorithm (\cref{algo:kernelized OMWU}) can be trivially implemented for $\Omega = Q_i$ in quadratic $\bigOh(|\Sigma_i|^2)$ time per iteration by directly evaluating the $|\Sigma_i|+1$ kernel evaluations $\{K_{Q_i}(\vb\^t, \vone)\} \cup \{K_{Q_i}(\vb\^t, \ebar_{\sigma}): \sigma \in \Sigma_i\}$ needed at each iteration, where $\ebar_{\sigma}\in\bbR^{\Sigma_i}$, defined in~\eqref{eq:ebar} for the general case, is the vector whose components are $\ebar_{\sigma}[\sigma']\defeq \bbone_{\sigma\neq \sigma'}$ for all $\sigma,\sigma'\in\Sigma_i$.
We refine that result by showing that an implementation of
KOMWU with \emph{linear}-time (\ie $\bigOh(|\Sigma_i|)$) per-iteration complexity exists, by exploiting the structure of the particular set of
kernel evaluations needed at every iteration. In particular, we rely on the following observation.

\begin{restatable}{proposition}{propefgratio}\label{prop:efg ratio}
    For any player $i\in\range{m}$, vector $\vec{x} \in \bbR_{>0}^{\Sigma_i}$, and sequence $ja \in \Sigma^*_i$,
    \[
        \frac{1 \!-\! K_{Q_i}(\vec{x}, \bar{\vec{e}}_{ja}) / K_{Q_i}(\vec{x},\! \vone)}{1 \!-\! K_{Q_i}(\vec{x}, \bar{\vec{e}}_{p_j}) / K_{Q_i}(\vec{x},\! \vone)} = \frac{\vec{x}[ja]\prod_{j'\in\mathcal{C}_{ja}}\! K_{j'}(\vec{x},\!\vone)}{K_j(\vec{x},\! \vone)}.
    \]
\end{restatable}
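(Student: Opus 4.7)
The plan is to pass from the kernel side to a combinatorial statement about deterministic sequence-form strategies containing a given sequence, and then unwind the recursive tree decomposition that underlies \cref{thm:efg kernel computation}. As a first step, I would note that $\ebar_\sigma[\sigma'] = 1 - \bbone_{\sigma=\sigma'}$ implies $\prod_{\sigma'\in\vpi}\ebar_\sigma[\sigma'] = \bbone_{\sigma\notin\vpi}$ for any deterministic sequence-form strategy $\vpi\in\Pi_i$. Since $\Pi_i$ is the vertex set of $Q_i$, the definition~\eqref{eq:K Omega} gives
\[
S_\sigma(\vx) \defeq K_{Q_i}(\vx,\vone) - K_{Q_i}(\vx,\ebar_\sigma) = \sum_{\vpi\in\Pi_i:\,\sigma\in\vpi}\prod_{\sigma'\in\vpi}\vx[\sigma'].
\]
Dividing both the numerator and the denominator of the LHS of the proposition by $K_{Q_i}(\vx,\vone)$, the identity to be proved is equivalent to
\[
\frac{S_{ja}(\vx)}{S_{p_j}(\vx)} = \frac{\vx[ja]\prod_{j'\in\mathcal{C}_{ja}}K_{j'}(\vx,\vone)}{K_j(\vx,\vone)}.
\]

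Next, I would evaluate both sums using the tree structure of $\Pi_i$ afforded by perfect recall. Requiring a sequence $ja$ (resp.\ $p_j$) to lie in $\vpi$ forces exactly the actions along the unique chain of ancestor sequences $\emptyseq \to j_k a_k \to \cdots \to j_1 a_1 = p_j$ (and additionally $ja$ in the former case), while every sibling subtree branching off this chain is summed freely and, by~\eqref{eq:def Kj}, contributes a partial kernel $K_{j'}(\vx,\vone)$. Collecting the factors common to both sums into a single prefix $P(\vx)$, namely $\vx[\emptyseq]$ times the forced factors $\vx[j_la_l]$ for $l=1,\dots,k$, times the kernels $K_{j'}(\vx,\vone)$ for all siblings of the ancestor chain strictly above $j$, the decomposition yields
\[
S_{ja}(\vx) = P(\vx)\cdot\vx[ja]\prod_{j'\in\mathcal{C}_{ja}}K_{j'}(\vx,\vone), \qquad S_{p_j}(\vx) = P(\vx)\cdot K_j(\vx,\vone),
\]
where the second equality follows by recognizing the recursion~\eqref{eq:efg kernel computation 2} for $K_j(\vx,\vone)$ inside the free sum over actions at $j$ itself.

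The common prefix $P(\vx)$ cancels in the ratio, giving the claim. The corner case $p_j = \emptyseq$ (empty ancestor chain, so $j\in\mathcal{C}_\emptyseq$) is handled identically via the root-level recursion~\eqref{eq:efg kernel computation}: $P(\vx)$ collapses to $\vx[\emptyseq]\prod_{j_0\in\mathcal{C}_\emptyseq\setminus\{j\}}K_{j_0}(\vx,\vone)$, and $S_{\emptyseq}(\vx) = K_{Q_i}(\vx,\vone) = P(\vx)\cdot K_j(\vx,\vone)$ as required. Positivity of $\vx$ guarantees $S_{p_j}(\vx) > 0$, so the LHS is well defined. The main obstacle is purely notational: one must set up the ancestor-chain decomposition carefully enough to exhibit the same prefix $P(\vx)$ in both $S_{ja}$ and $S_{p_j}$, which is immediate from the independence of sibling subtrees but requires a bit of care to state precisely. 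Once that is done, the identity is essentially a direct consequence of \cref{thm:efg kernel computation}.
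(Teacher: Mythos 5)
Your proposal is correct and follows essentially the same route as the paper: both reduce the claim to the identity $\bigl(K_{Q_i}(\vx,\vone)-K_{Q_i}(\vx,\ebar_\sigma)\bigr)=\sum_{\vpi\in\Pi_i:\vpi[\sigma]=1}\prod_{\sigma'\in\vpi}\vx[\sigma']$ and then factor the sums over strategies containing $ja$ and $p_j$ into a common ``outside the subtree at $j$'' part times the subtree contributions $\vx[ja]\prod_{j'\in\mathcal{C}_{ja}}K_{j'}(\vx,\vone)$ and $K_j(\vx,\vone)$, respectively. The only difference is presentational: the paper keeps the common factor abstract as a sum over a set $\mathcal{P}$ obtained from a splice/bijection lemma, whereas you evaluate it explicitly as a product along the ancestor chain; both hinge on the same independence of sibling subtrees, and both cancel the common factor in the ratio.
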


In order to compute $\{K_{Q_i}(\vb\^t, \ebar_{\sigma}): \sigma \in \Sigma_i\}$ in cumulative $\bigOh(|\Sigma_i|)$ time, we then do the following.
\begin{enumerate}[nosep,left=0mm]
    \item We compute the values $K_j(\vb\^t, \vone)$ for all $j \in \cJ_i$ in cumulative $\bigOh(|\Sigma_i|)$ time by using \eqref{eq:efg kernel computation 2}.\label{step:one}
    \item We compute the ratio $K_{Q_i}(\vb\^t, \ebar_\emptyseq) / K_{Q_i}(\vb\^t, \vone)$ by evaluating the two kernel separately using \cref{thm:efg kernel computation}, spending $\bigOh(|\Sigma_i|)$ time.\label{step:two}
    \item We repeatedly use \cref{prop:efg ratio} in a top-down fashion along the
          tree-form decision problem of player~$i$ to compute the ratio $K_{Q_i}(\vb\^t, \ebar_{ja}) / K_{Q_i}(\vb\^t, \vone)$ for each sequence $ja\in\Sigma_i^*$ given the value of the parent ratio $K_{Q_i}(\vb\^t, \ebar_{p_j}) / K_{Q_i}(\vb\^t, \vone)$ and the partial kernel evaluations $\{K_j(\vb\^t, \vone): j \!\in\! \cJ_i\}$ from Step~\ref{step:one}. For each $ja\in\Sigma_i^*$, \cref{prop:efg ratio} gives a formula whose runtime is linear in the number of children decision points $|\mathcal{C}_{ja}|$ at that sequence. Therefore, the cumulative runtime required to compute all ratios $K_{Q_i}(\vb\^t, \ebar_{ja}) / K_{Q_i}(\vb\^t, \vone)$ is
          $\bigOh(|\Sigma_i|)$.\label{step:three}
    \item By multiplying the ratios computed in Step~\ref{step:three} by the value of $K_{Q_i}(\vb\^t, \vone)$ computed in Step~\ref{step:two}, we can easily recover each $K_{Q_i}(\vb\^t, \ebar_{\sigma})$ for every $\sigma \in \Sigma_i^*$.
\end{enumerate}

Hence, we have just proved the following.

\begin{theorem}\label{thm:KOMWU in EFGs}
    For each player $i$ in a perfect-recall extensive-form game, the Kernelized OMWU algorithm can be implemented exactly, with a per-iteration complexity linear in the number of sequences $|\Sigma_i|$ of that player.
\end{theorem}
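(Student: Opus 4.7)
The plan is to invoke Theorem~\ref{thm:kernel omwu equivalent}, which reduces the per-iteration cost of KOMWU (with $\Omega = Q_i$) to computing the $|\Sigma_i|+1$ kernel values $K_{Q_i}(\vb\^t,\vone)$ and $\{K_{Q_i}(\vb\^t,\ebar_\sigma)\}_{\sigma\in\Sigma_i}$. Theorem~\ref{thm:efg kernel computation} already yields an $\bigOh(|\Sigma_i|)$-time evaluation of a single kernel query, so a black-box implementation would cost $\bigOh(|\Sigma_i|^2)$ per round. The job of the proof is therefore to share work across the $|\Sigma_i|+1$ queries so that the cumulative running time drops to $\bigOh(|\Sigma_i|)$. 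The four enumerated steps preceding the theorem statement constitute exactly such an algorithm; it suffices to verify their correctness and aggregate complexity.

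First, I would perform a single bottom-up sweep over the decision points of player $i$, using the recursion~\eqref{eq:efg kernel computation 2} to store $K_j(\vb\^t,\vone)$ for every $j\in\cJ_i$. The work done at each sequence $ja\in\Sigma_i^*$ is proportional to $1+|\mathcal{C}_{ja}|$, so the total cost is $\bigOh(|\Sigma_i|)$. A second $\bigOh(|\Sigma_i|)$ application of~\eqref{eq:efg kernel computation}--\eqref{eq:efg kernel computation 2} then computes the normalizer $\alpha\defeq K_{Q_i}(\vb\^t,\vone)$ and the root query $K_{Q_i}(\vb\^t,\ebar_\emptyseq)$, from which the root ratio $1 - K_{Q_i}(\vb\^t,\ebar_\emptyseq)/\alpha$ is obtained in constant extra time.

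The heart of the argument is a top-down traversal that, armed with Proposition~\ref{prop:efg ratio}, propagates ratios from a parent sequence $p_j$ to each child sequence $ja$. The proposition expresses $1-K_{Q_i}(\vb\^t,\ebar_{ja})/\alpha$ as a closed-form factor times $1-K_{Q_i}(\vb\^t,\ebar_{p_j})/\alpha$, with the factor involving only $\vb\^t[ja]$, the already-computed $K_j(\vb\^t,\vone)$, and the already-computed $\{K_{j'}(\vb\^t,\vone)\}_{j'\in\mathcal{C}_{ja}}$. Thus no additional kernel evaluation is incurred, and the work needed to process sequence $ja$ is $\bigOh(1+|\mathcal{C}_{ja}|)$, giving aggregate cost $\bigOh(|\Sigma_i|)$ across the whole traversal. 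Multiplying each ratio by $\alpha$ recovers every $K_{Q_i}(\vb\^t,\ebar_\sigma)$ in constant additional time per sequence, and the bound then follows from Theorem~\ref{thm:kernel omwu equivalent}.

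The main obstacle is really Proposition~\ref{prop:efg ratio} itself; once that identity is in hand, everything else is bookkeeping. Its verification ultimately rests on the observation $\phi_\Omega(\vone)[\vpi]-\phi_\Omega(\ebar_\sigma)[\vpi]=\bbone_{\sigma\in\vpi}$ already used inside the proof of Theorem~\ref{thm:bt to xt}, combined with perfect recall: only under perfect recall does the set of deterministic sequence-form strategies $\vpi\in\Pi_i$ containing a given non-empty sequence $ja$ factorize into an independent choice of subtree strategies at each $j'\in\mathcal{C}_{ja}$ together with a strategy on the ``rest'' of the tree that is common to the numerator and denominator and therefore cancels. This factorization is precisely what allows the right-hand side of Proposition~\ref{prop:efg ratio} to decompose into the stated product of partial kernels $K_{j'}(\vb\^t,\vone)$ divided by $K_j(\vb\^t,\vone)$.
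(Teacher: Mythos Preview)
Your proposal is correct and follows exactly the paper's own argument: invoke \cref{thm:kernel omwu equivalent}, use \cref{thm:efg kernel computation} for a single bottom-up pass to tabulate all $K_j(\vb\^t,\vone)$ and the normalizer, then propagate ratios top-down via \cref{prop:efg ratio} at cost $\bigOh(1+|\mathcal{C}_{ja}|)$ per sequence. The extra paragraph explaining why \cref{prop:efg ratio} holds is not strictly needed (the paper proves it separately in the appendix), but your intuition about the perfect-recall factorization is accurate and matches that proof.
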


\subsection{KOMWU Regret Bounds and Convergence}\label{sec:efg analysis}

If the players in an EFG run KOMWU,
then we can combine \cref{thm:kernel omwu equivalent} with standard OMWU regret bounds, \cref{prop:efg vertex count,prop:omwu near optimal,prop:omwu optimal sum,prop:omwu last iterate} to get the following:
\begin{theorem}
    In an EFG, after $T$ rounds of learning under the COLS, KOMWU satisfies
    \begin{enumerate}[nosep,nolistsep,left=0mm]
        \item
              A player $i$ using KOMWU with $\eta\^t \defeq \eta = \sqrt{8\log(A) \|Q_i\|_1}/\sqrt{T}$ is guaranteed to incur regret at most $R^T_i = \bigOh(\sqrt{\|Q_i\|_1\log(A)T})$.
        \item
              There exist $C, C' > 0$ such that if all $m$ players learn using KOMWU with constant learning rate $\eta\^t \defeq \eta \leq 1/(Cm\log^4T)$, then each player is guaranteed to incur regret at most $\frac{\log (A_i) \|Q_i\|_1}{\eta} + C'\log T$.
        \item
              If all $m$ player learn using KOMWU with $\eta\^t \defeq \eta \leq 1/\sqrt{8}(m-1)$, then the sum of regrets is at most $\sum_{i=1}^m R_i^T = \bigOh(\max_{i=1}^m\{\|Q_i\|_1\log A_i\} \frac{m}{\eta})$.
        \item
              For two-player zero-sum EFGs, if both players learn using KOMWU, then there exists a schedule of learning-rates $\eta^{(t)}$ such that the iterates converge to a Nash equilibrium.
              Furthermore, if the NFG representation of the EFG has a unique Nash equilibrium and both players use learning rates $\eta\^t = \eta \leq 1/8$, then the iterates converge to a Nash equilibrium at a linear rate $C (1+C')^{-t}$, where $C,C'$ are constants that depend on the game.
    \end{enumerate}
    \label{thm:komwu efg}
\end{theorem}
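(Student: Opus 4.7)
The plan is to derive all four parts of \cref{thm:komwu efg} by black-box combination of three ingredients that are already in place: (i) \cref{thm:kernel omwu equivalent}, which states that KOMWU produces exactly the same iterates $\vec{x}_i\^t$ as Vertex OMWU, (ii) the discussion in \cref{sec:vertex} showing that Vertex OMWU on the sequence-form polytope $Q_i$ is equivalent, via the linear map $\mat{V}_i$, to OMWU on the NFG representation $\tilde{\Gamma}$ with action set $\cV_{Q_i} = \Pi_i$, and (iii) \cref{prop:efg vertex count}, which bounds $|\Pi_i| \le A^{\|Q_i\|_1}$ and hence $\log|\cV_{Q_i}| \le \|Q_i\|_1 \log A$. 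Once this chain of equivalences is in place, each of the four claims reduces to plugging the vertex count into the corresponding NFG guarantee.

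For parts (2) and (3), I would directly invoke \cref{prop:omwu near optimal,prop:omwu optimal sum}, applied to the equivalent NFG $\tilde{\Gamma}$, with $|\cA_i|$ replaced by $|\Pi_i|$. Since regret is invariant under the Vertex OMWU reduction (this is verified at the end of \cref{sec:vertex}), the bounds $R_i^T \le \log|\Pi_i|/\eta + C'\log T$ and $\sum_i R_i^T \le \frac{m}{\eta}\max_i \log|\Pi_i|$ transfer verbatim to KOMWU on the EFG, and substituting $\log|\Pi_i| \le \|Q_i\|_1 \log A$ yields the stated forms. For part (1), I would apply the standard single-agent OMWU regret bound against an adversarial sequence of losses, $R_i^T \le \log|\Pi_i|/\eta + \eta \sum_t \|\vec{\ell}_i\^t\|_\infty^2 \le \|Q_i\|_1 \log(A)/\eta + \eta T / 8$ (with bounded losses under the COLS), and optimize $\eta = \sqrt{8\log(A)\|Q_i\|_1 / T}$ to obtain $R_i^T = \bigOh(\sqrt{\|Q_i\|_1 \log(A)\, T})$.

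For part (4), the last-iterate statement, I would again use the equivalence from \cref{sec:vertex}: if the two players in the EFG run KOMWU, then their strategies $\vec{x}_i\^t$ and $\vec{\lambda}_i\^t$ (the play in $\tilde{\Gamma}$) are related by the linear map $\vec{x}_i\^t = \mat{V}_i \vec{\lambda}_i\^t$. Consequently, convergence of $\vec{\lambda}_i\^t$ implies convergence of $\vec{x}_i\^t$ to the corresponding convex combination of vertices, which is a Nash equilibrium of $\Gamma$ since Nash equilibria of $\Gamma$ and $\tilde{\Gamma}$ are in bijection through $\mat{V}_i$. The schedule of learning rates and the linear-rate KL-bound from \cref{prop:omwu last iterate} can then be invoked verbatim on $\tilde{\Gamma}$, under the stated uniqueness assumption on the NFG representation.

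The only non-routine step I anticipate is being careful about the linear-rate convergence: the bound in \cref{prop:omwu last iterate} is in KL-divergence on the simplex $\Delta(\Pi_i)$, and the $\vec{\lambda}_i\^t$ produced implicitly by KOMWU live on this exponentially large simplex even though we never materialize them. I would therefore state the convergence of $\vec{x}_i\^t$ to a Nash equilibrium of $\Gamma$ by noting that KL-convergence of $\vec{\lambda}_i\^t$ to $\vec{\lambda}_i^*$ implies $\ell_1$-convergence via Pinsker and hence $\ell_1$-convergence of $\vec{x}_i\^t = \mat{V}_i \vec{\lambda}_i\^t$ at the same linear rate, up to a factor of $\|\mat{V}_i\|_{1\to 1} \le \|Q_i\|_1$ absorbed into the constant $C$. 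No other part of the argument requires any new calculation beyond the substitutions above.
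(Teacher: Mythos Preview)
Your proposal is correct and follows essentially the same approach as the paper, which proves the theorem by the one-line observation that combining \cref{thm:kernel omwu equivalent}, the Vertex-OMWU/OMWU equivalence of \cref{sec:vertex}, \cref{prop:efg vertex count}, and \cref{prop:omwu near optimal,prop:omwu optimal sum,prop:omwu last iterate} yields all four statements. If anything, your treatment of part~(4)---explicitly passing from KL-convergence of $\vec{\lambda}_i\^t$ to $\ell_1$-convergence of $\vec{x}_i\^t$ via Pinsker and a bound on $\|\mat{V}_i\|_{1\to 1}$---is more careful than the paper's informal remark that convergence of $\vec{\lambda}_i\^t$ ``clearly'' implies convergence of $\vec{x}_i\^t$.
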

Prior to our result, the strongest regret bound for methods that take linear time per iteration was based on instantiating e.g. follow the regularized leader (FTRL) or its optimistic variant with the dilatable global entropy regularizer of \citet{Farina21:Better}.
For FTRL this yields a regret bound of the form $\bigOh(\sqrt{\log(A)\,\|Q\|_1^2 T})$.
For optimistic FTRL this yields a regret bound of the form $\bigOh(\log(A)\,\|Q\|_1^2 \sqrt{m} T^{1/4})$, when every player in an $m$-player game uses that algorithm and appropriate learning rates.

Our algorithm improves the state-of-the-art rate in two ways.
First, we improve the dependence on game constants by almost a square root factor, because our dependence on $\|Q\|_1$ is smaller by a square root, compared to prior results.
Secondly, in the multi-player general-sum setting, every other method achieves regret that is on the order of $T^{1/4}$, whereas our method achieves regret on the order of $\log^4(T)$.
In the context of two-player zero-sum EFGs, the bound on the sum of regrets in \cref{thm:komwu efg} guarantees convergence to a Nash equilibrium at a rate of $\bigOh(\max_i \|Q_i\|_1\log A_i / T)$.
This similarly improves the prior state of the art.

\citet{Lee21:Last} showed the first last-iterate results for EFGs using algorithms that require linear time per iteration. In particular, they show that the dilated entropy DGF combined with optimistic online mirror descent leads to last-iterate convergence at a linear rate.
However, their result requires learning rates $\eta \leq 1/(8|\Sigma_i|)$. This learning rate is impractically small in practice. In contrast, our last-iterate linear-rate result for KOMWU allows learning rates of size $1/8$.
That said, our result is not directly comparable to theirs. The existence of a unique Nash equilibrium in the EFG representation is a necessary condition for uniqueness in the NFG representation. However, it is possible that the NFG has additional equilibria even when the EFG does not.
\citet{Wei21:Linear} conjecture that linear-rate convergence holds even without the assumption of a unique Nash equilibrium. If this conjecture turns out to be true for NFGs, then \cref{thm:kernel omwu equivalent} would immediately imply that KOMWU also has last-iterate linear-rate convergence without the uniqueness assumption.

\subsection{Experimental Evaluation}
We numerically investigate agents learning under the COLS in Kuhn and Leduc poker \citep{Kuhn50:Simplified,Southey05:Bayes}. %
We compare the maximum per-player regret cumulated by KOMWU for four different choices of constant learning rate, against that cumulated by two standard algorithms from the extensive-form game solving literature (CFR and CFR(RM+)). More details about the games and the algorithms are given in \cref{app:experiments}. Results are shown in \cref{fig:experiments}. We observe that the per-player regret cumulated by KOMWU plateaus and remains constants, unlike the CFR variants. This behavior is consistent with the near-optimal per-player regret guarantees of KOMWU (\cref{thm:komwu efg}).

\begin{figure}[ht]
    \includegraphics[width=1\linewidth]{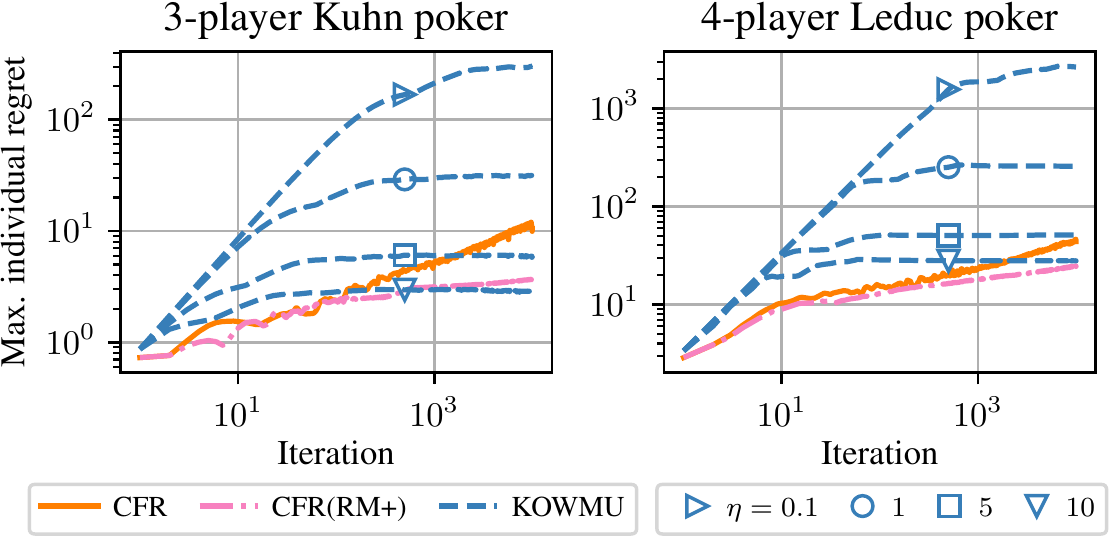}
    \vspace{-4mm}
    \caption{Maximum per-player regret cumulated by KOWMU compared to two variants of the CFR algorithm.}
    \label{fig:experiments}
\end{figure}

\section{Conclusions}

We introduce the Kernelized OMWU algorithm for simulating OMWU on the vertices of a 0/1-polyhedral set.
KOMWU can be implemented via black-box access to kernel evaluations, and these evaluations can be performed in linear time for EFGs.
This leads to new state-of-the-art regret bounds and other properties for no-regret learning on EFGs that were previously only obtained for NFGs.
In the appendix, we show that KOMWU can be implemented efficiently for several other domains:
$n$-sets, which are  0/1-polydral sets of the form $\vpi \in \{0,1\}^d: \|\vpi\|_1 = n$,
the unit hypercube, flows in directed acyclic graphs, permutations, and Cartesian products of sets with efficient kernel evaluations.
For $n$-sets we obtain an improved cost-per-iteration compared to existing methods for simulating OMWU.

\section*{Acknowledgments}
This material is based on work supported by the National Science Foundation under grants IIS-1718457, IIS-1901403, IIS-1943607, and CCF-1733556, and the ARO under award W911NF2010081.

\bibliographystyle{icml2022}
\bibliography{dairefs}

\begin{thebibliography}{40}
\providecommand{\natexlab}[1]{#1}
\providecommand{\url}[1]{\texttt{#1}}
\expandafter\ifx\csname urlstyle\endcsname\relax
  \providecommand{\doi}[1]{doi: #1}\else
  \providecommand{\doi}{doi: \begingroup \urlstyle{rm}\Url}\fi

\bibitem[Audibert et~al.(2014)Audibert, Bubeck, and Lugosi]{audibert2014regret}
Audibert, J.-Y., Bubeck, S., and Lugosi, G.
\newblock Regret in online combinatorial optimization.
\newblock \emph{Mathematics of Operations Research}, 39\penalty0 (1):\penalty0
  31--45, 2014.

\bibitem[Bowling et~al.(2015)Bowling, Burch, Johanson, and
  Tammelin]{Bowling15:Heads}
Bowling, M., Burch, N., Johanson, M., and Tammelin, O.
\newblock Heads-up limit hold'em poker is solved.
\newblock \emph{Science}, 347\penalty0 (6218), January 2015.

\bibitem[Brown \& Sandholm(2017)Brown and Sandholm]{Brown17:Superhuman}
Brown, N. and Sandholm, T.
\newblock Superhuman {AI} for heads-up no-limit poker: {Libratus} beats top
  professionals.
\newblock \emph{Science}, pp.\  eaao1733, Dec. 2017.

\bibitem[Brown \& Sandholm(2019)Brown and Sandholm]{Brown19:Superhuman}
Brown, N. and Sandholm, T.
\newblock Superhuman {AI} for multiplayer poker.
\newblock \emph{Science}, 365\penalty0 (6456):\penalty0 885--890, 2019.

\bibitem[Cesa-Bianchi \& Lugosi(2012)Cesa-Bianchi and
  Lugosi]{cesa2012combinatorial}
Cesa-Bianchi, N. and Lugosi, G.
\newblock Combinatorial bandits.
\newblock \emph{Journal of Computer and System Sciences}, 78\penalty0
  (5):\penalty0 1404--1422, 2012.

\bibitem[Chen \& Peng(2020)Chen and Peng]{Chen20:Hedging}
Chen, X. and Peng, B.
\newblock Hedging in games: Faster convergence of external and swap regrets.
\newblock In \emph{Proceedings of the Annual Conference on Neural Information
  Processing Systems (NeurIPS)}, 2020.

\bibitem[Daskalakis \& Panageas(2019)Daskalakis and
  Panageas]{Daskalakis18:Last}
Daskalakis, C. and Panageas, I.
\newblock Last-iterate convergence: Zero-sum games and constrained min-max
  optimization.
\newblock In \emph{10th Innovations in Theoretical Computer Science Conference
  (ITCS 2019)}. Schloss Dagstuhl-Leibniz-Zentrum fuer Informatik, 2019.

\bibitem[Daskalakis et~al.(2021)Daskalakis, Fishelson, and
  Golowich]{Daskalakis21:Near}
Daskalakis, C., Fishelson, M., and Golowich, N.
\newblock Near-optimal no-regret learning in general games.
\newblock \emph{CoRR}, abs/2108.06924, 2021.

\bibitem[Farina et~al.(2018)Farina, Celli, Gatti, and Sandholm]{Farina18:Ex}
Farina, G., Celli, A., Gatti, N., and Sandholm, T.
\newblock Ex ante coordination and collusion in zero-sum multi-player
  extensive-form games.
\newblock In \emph{Advances in Neural Information Processing Systems}, pp.\
  9638--9648, 2018.

\bibitem[Farina et~al.(2019{\natexlab{a}})Farina, Kroer, Brown, and
  Sandholm]{Farina19:Stable}
Farina, G., Kroer, C., Brown, N., and Sandholm, T.
\newblock Stable-predictive optimistic counterfactual regret minimization.
\newblock In \emph{International Conference on Machine Learning (ICML)},
  2019{\natexlab{a}}.

\bibitem[Farina et~al.(2019{\natexlab{b}})Farina, Kroer, and
  Sandholm]{Farina19:Optimistic}
Farina, G., Kroer, C., and Sandholm, T.
\newblock Optimistic regret minimization for extensive-form games via dilated
  distance-generating functions.
\newblock In \emph{Advances in Neural Information Processing Systems, NeurIPS
  2019,}, pp.\  5222--5232, 2019{\natexlab{b}}.

\bibitem[Farina et~al.(2021{\natexlab{a}})Farina, Kroer, and
  Sandholm]{Farina21:Better}
Farina, G., Kroer, C., and Sandholm, T.
\newblock Better regularization for sequential decision spaces: Fast
  convergence rates for {N}ash, correlated, and team equilibria.
\newblock In \emph{ACM Conference on Economics and Computation},
  2021{\natexlab{a}}.

\bibitem[Farina et~al.(2021{\natexlab{b}})Farina, Kroer, and
  Sandholm]{Farina21:Faster}
Farina, G., Kroer, C., and Sandholm, T.
\newblock Faster game solving via predictive blackwell approachability:
  Connecting regret matching and mirror descent.
\newblock In \emph{Proceedings of the AAAI Conference on Artificial
  Intelligence}, 2021{\natexlab{b}}.

\bibitem[Gordon et~al.(2008)Gordon, Greenwald, and Marks]{Gordon08:No}
Gordon, G.~J., Greenwald, A., and Marks, C.
\newblock No-regret learning in convex games.
\newblock In \emph{Proceedings of the 25\textsuperscript{th} international
  conference on Machine learning}, pp.\  360--367. ACM, 2008.

\bibitem[Helmbold \& Warmuth(2009)Helmbold and Warmuth]{helmbold2009learning}
Helmbold, D. and Warmuth, M.
\newblock Learning permutations with exponential weights.
\newblock \emph{Journal of Machine Learning Research}, 10\penalty0 (7), 2009.

\bibitem[Hoda et~al.(2010)Hoda, Gilpin, Pe{\~n}a, and
  Sandholm]{Hoda10:Smoothing}
Hoda, S., Gilpin, A., Pe{\~n}a, J., and Sandholm, T.
\newblock Smoothing techniques for computing {N}ash equilibria of sequential
  games.
\newblock \emph{Mathematics of Operations Research}, 35\penalty0 (2), 2010.

\bibitem[Hsieh et~al.(2021)Hsieh, Antonakopoulos, and
  Mertikopoulos]{Hsieh21:Adaptive}
Hsieh, Y.-G., Antonakopoulos, K., and Mertikopoulos, P.
\newblock Adaptive learning in continuous games: Optimal regret bounds and
  convergence to nash equilibrium.
\newblock \emph{arXiv preprint arXiv:2104.12761}, 2021.

\bibitem[Jerrum et~al.(2004)Jerrum, Sinclair, and Vigoda]{jerrum2004polynomial}
Jerrum, M., Sinclair, A., and Vigoda, E.
\newblock A polynomial-time approximation algorithm for the permanent of a
  matrix with nonnegative entries.
\newblock \emph{Journal of the ACM (JACM)}, 51\penalty0 (4):\penalty0 671--697,
  2004.

\bibitem[Kalai \& Vempala(2005)Kalai and Vempala]{Kalai05:Efficient}
Kalai, A. and Vempala, S.
\newblock Efficient algorithms for online decision problems.
\newblock \emph{Journal of Computer and System Sciences}, 71:\penalty0
  291--307, 2005.

\bibitem[Koller et~al.(1996)Koller, Megiddo, and {von
  Stengel}]{Koller96:Efficient}
Koller, D., Megiddo, N., and {von Stengel}, B.
\newblock Efficient computation of equilibria for extensive two-person games.
\newblock \emph{Games and Economic Behavior}, 14\penalty0 (2), 1996.

\bibitem[Koo et~al.(2007)Koo, Globerson, Carreras~P{\'e}rez, and
  Collins]{koo2007structured}
Koo, T., Globerson, A., Carreras~P{\'e}rez, X., and Collins, M.
\newblock Structured prediction models via the matrix-tree theorem.
\newblock In \emph{Joint Conference on Empirical Methods in Natural Language
  Processing and Computational Natural Language Learning (EMNLP-CoNLL)}, pp.\
  141--150, 2007.

\bibitem[Koolen et~al.(2010)Koolen, Warmuth, and Kivinen]{koolen2010hedging}
Koolen, W.~M., Warmuth, M.~K., and Kivinen, J.
\newblock Hedging structured concepts.
\newblock In \emph{COLT 2010: Proceedings of the 23rd Annual Conference on
  Learning Theory}, pp.\  93--105, 2010.

\bibitem[Kroer et~al.(2015)Kroer, Waugh, K{\i}l{\i}n\c{c}-Karzan, and
  Sandholm]{Kroer15:Faster}
Kroer, C., Waugh, K., K{\i}l{\i}n\c{c}-Karzan, F., and Sandholm, T.
\newblock Faster first-order methods for extensive-form game solving.
\newblock In \emph{Proceedings of the ACM Conference on Economics and
  Computation (EC)}, 2015.

\bibitem[Kroer et~al.(2018)Kroer, Farina, and Sandholm]{Kroer18:Solving}
Kroer, C., Farina, G., and Sandholm, T.
\newblock Solving large sequential games with the excessive gap technique.
\newblock In \emph{Proceedings of the Annual Conference on Neural Information
  Processing Systems (NIPS)}, 2018.

\bibitem[Kroer et~al.(2020)Kroer, Waugh, K{\i}l{\i}n{\c{c}}-Karzan, and
  Sandholm]{Kroer20:Faster}
Kroer, C., Waugh, K., K{\i}l{\i}n{\c{c}}-Karzan, F., and Sandholm, T.
\newblock Faster algorithms for extensive-form game solving via improved
  smoothing functions.
\newblock \emph{Mathematical Programming}, 2020.

\bibitem[Kuhn(1950)]{Kuhn50:Simplified}
Kuhn, H.~W.
\newblock A simplified two-person poker.
\newblock In Kuhn, H.~W. and Tucker, A.~W. (eds.), \emph{Contributions to the
  Theory of Games}, volume~1 of \emph{Annals of Mathematics Studies, 24}, pp.\
  97--103. Princeton University Press, Princeton, New Jersey, 1950.

\bibitem[Lee et~al.(2021)Lee, Kroer, and Luo]{Lee21:Last}
Lee, C.-W., Kroer, C., and Luo, H.
\newblock Last-iterate convergence in extensive-form games.
\newblock \emph{Advances in Neural Information Processing Systems}, 34, 2021.

\bibitem[Lei et~al.(2021)Lei, Nagarajan, Panageas, et~al.]{Lei21:Last}
Lei, Q., Nagarajan, S.~G., Panageas, I., et~al.
\newblock Last iterate convergence in no-regret learning: constrained min-max
  optimization for convex-concave landscapes.
\newblock In \emph{International Conference on Artificial Intelligence and
  Statistics}, pp.\  1441--1449. PMLR, 2021.

\bibitem[Morav{\v c}{\'\i}k et~al.(2017)Morav{\v c}{\'\i}k, Schmid, Burch,
  Lis{\'y}, Morrill, Bard, Davis, Waugh, Johanson, and
  Bowling]{Moravvcik17:DeepStack}
Morav{\v c}{\'\i}k, M., Schmid, M., Burch, N., Lis{\'y}, V., Morrill, D., Bard,
  N., Davis, T., Waugh, K., Johanson, M., and Bowling, M.
\newblock Deepstack: Expert-level artificial intelligence in heads-up no-limit
  poker.
\newblock \emph{Science}, May 2017.

\bibitem[Rakhlin \& Sridharan(2013{\natexlab{a}})Rakhlin and
  Sridharan]{Rakhlin13:Online}
Rakhlin, A. and Sridharan, K.
\newblock Online learning with predictable sequences.
\newblock In \emph{Conference on Learning Theory}, pp.\  993--1019,
  2013{\natexlab{a}}.

\bibitem[Rakhlin \& Sridharan(2013{\natexlab{b}})Rakhlin and
  Sridharan]{Rakhlin13:Optimization}
Rakhlin, A. and Sridharan, K.
\newblock Optimization, learning, and games with predictable sequences.
\newblock In \emph{Advances in Neural Information Processing Systems}, pp.\
  3066--3074, 2013{\natexlab{b}}.

\bibitem[Romanovskii(1962)]{Romanovskii62:Reduction}
Romanovskii, I.
\newblock Reduction of a game with complete memory to a matrix game.
\newblock \emph{Soviet Mathematics}, 3, 1962.

\bibitem[Southey et~al.(2005)Southey, Bowling, Larson, Piccione, Burch,
  Billings, and Rayner]{Southey05:Bayes}
Southey, F., Bowling, M., Larson, B., Piccione, C., Burch, N., Billings, D.,
  and Rayner, C.
\newblock {Bayes}' bluff: Opponent modelling in poker.
\newblock In \emph{Proceedings of the 21st Annual Conference on Uncertainty in
  Artificial Intelligence (UAI)}, July 2005.

\bibitem[Syrgkanis et~al.(2015)Syrgkanis, Agarwal, Luo, and
  Schapire]{Syrgkanis15:Fast}
Syrgkanis, V., Agarwal, A., Luo, H., and Schapire, R.~E.
\newblock Fast convergence of regularized learning in games.
\newblock In \emph{Advances in Neural Information Processing Systems}, pp.\
  2989--2997, 2015.

\bibitem[Takimoto \& Warmuth(2003)Takimoto and Warmuth]{Takimoto03:Path}
Takimoto, E. and Warmuth, M.~K.
\newblock Path kernels and multiplicative updates.
\newblock \emph{The Journal of Machine Learning Research}, 4:\penalty0
  773--818, 2003.

\bibitem[Tammelin et~al.(2015)Tammelin, Burch, Johanson, and
  Bowling]{Tammelin15:Solving}
Tammelin, O., Burch, N., Johanson, M., and Bowling, M.
\newblock Solving heads-up limit {T}exas hold'em.
\newblock In \emph{Proceedings of the 24th International Joint Conference on
  Artificial Intelligence (IJCAI)}, 2015.

\bibitem[{von Stengel}(1996)]{Stengel96:Efficient}
{von Stengel}, B.
\newblock Efficient computation of behavior strategies.
\newblock \emph{Games and Economic Behavior}, 14\penalty0 (2):\penalty0
  220--246, 1996.

\bibitem[Warmuth \& Kuzmin(2008)Warmuth and Kuzmin]{warmuth2008randomized}
Warmuth, M.~K. and Kuzmin, D.
\newblock Randomized online pca algorithms with regret bounds that are
  logarithmic in the dimension.
\newblock \emph{Journal of Machine Learning Research}, 9\penalty0
  (Oct):\penalty0 2287--2320, 2008.

\bibitem[Wei et~al.(2021)Wei, Lee, Zhang, and Luo]{Wei21:Linear}
Wei, C.-Y., Lee, C.-W., Zhang, M., and Luo, H.
\newblock Linear last-iterate convergence in constrained saddle-point
  optimization.
\newblock In \emph{International Conference on Learning Representations}, 2021.

\bibitem[Zinkevich et~al.(2007)Zinkevich, Bowling, Johanson, and
  Piccione]{Zinkevich07:Regret}
Zinkevich, M., Bowling, M., Johanson, M., and Piccione, C.
\newblock Regret minimization in games with incomplete information.
\newblock In \emph{Proceedings of the Annual Conference on Neural Information
  Processing Systems (NIPS)}, 2007.

\end{thebibliography}

\clearpage
\onecolumn
\appendix
\section{Additional Related Work}\label{app:related works}
\subsection{More Results for Optimistic Algorithms in Games}
For individual regret in multi-player general-sum NFGs, \citet{Syrgkanis15:Fast} first show $\bigOh(T^{1/4})$ regret for general optimistic OMD and FTRL algorithms.
The result is improved to $\bigOh(T^{1/6})$ by \citep{Chen20:Hedging}, but only for OMWU in two-player NFGs.
\citet{Daskalakis21:Near} show that OMWU enjoys $\bigOh(\log^4 T)$ regret in multi-player general-sum NFGs.

As for last-iterate convergence in two-player zero-sum games, \citet{Daskalakis18:Last} show an asymptotic result for OMWU under the unique Nash equilibrium assumption.
\citet{Wei21:Linear} further show a linear convergence rate while allowing larger learning rates under the same assumption.
\citet{Hsieh21:Adaptive} show another asymptomatic convergence result without the assumption.
It is also worth noting that OGDA, another popular optimistic algorithm, has been shown its last-iterate convergence in general polyhedron games \citep{Wei21:Linear}.
\subsection{Approaches in Online Combinatorial Optimization}
Besides performing MWU/OMWU over vertices, we review two additional approaches in online combinatorial optimization:

\paragraph{OMD over the Convex Hull}
This approach is running Online Mirror Descent (OMD) over the convex hull \citep{koolen2010hedging,audibert2014regret}.
It is well known that OMD with the negative entropy regularizer results in a (dimension-wise) multiplicative weight update.
For the case that the set of vertices is a standard basis, this algorithm coincides with the MWU over the probability simplex.
However, for general cases, it requires to project back to the convex hull and the procedure may not be efficient.
\citet{helmbold2009learning} first used this approach for permutations, and \citet{koolen2010hedging} generally studied it for arbitrary 0/1 polyhedral sets and show its efficiency for more cases.

\paragraph{FTPL}
Another approach is called Follow the Perturbed Leader \citep{Kalai05:Efficient}.
This approach adds a random perturbation to the cumulative loss vector, and greedily selects the vertex with minimal perturbed loss.
The latter procedure corresponds to linear optimization over the set of vertices, which can be solved efficiently for most cases of interest.
We are not aware of any previous work using this approach for EFGs though.

\section{Pseudocode}\label{app:pseudocode}

Below we show pseudocode for OMWU and Vertex OMWU (\cref{sec:vertex}).

\begin{figure}[H]
    \begin{minipage}[t]{.49\textwidth}
        \begin{algorithm}[H]
            \caption{OMWU}
            \label{algo:vanilla OMWU}
            \DontPrintSemicolon
            \KwData{Finite set of choices $\cA$, learning rates $\eta\^t > 0$\!\!\!}
            \BlankLine{}
            \vspace{4mm}
            $\vl\^0,~\vm\^0 \gets \vzero\in\bbR^{\cA};~~\vlam\^0 \gets \frac{1}{|\cA|}\vone\in\Delta(\cA)$\;%
            \label{ln:vanilla init}
            \For{$t=1,2,\dots$}{
            \textbf{receive} prediction $\vm\^t\in\bbR^{\cA}$ of next loss\;
            \Comment{\color{commentcolor}set $\vec{m}\^t = \vec{0}$ for non-predictive variant]}
            $\vw\^t \gets \vl\^{t-1} - \vm\^{t-1} + \vm\^t$\;
            \vspace{8mm}
            \For{$a \in \cA$}{
            $\displaystyle
                \vlam\^t[a] \gets \frac{\vlam\^{t-1}[a]\cdot e^{-\eta\^t\,\vw\^t[a]}}{\sum_{a' \in \cA} \vlam\^{t-1}[a']\cdot e^{-\eta\^t\,\vw\^t[a']}}
            $\label{ln:vanilla lam update}
            }
            \vspace{12.5mm}
            \textbf{output} $\vlam\^t \in \Delta(\cA)$\;
            \textbf{receive} loss vector $\vl\^t \in \bbR^{\cA}$\;
            }
        \end{algorithm}%
    \end{minipage}%
    \hfill%
    \begin{minipage}[t]{.49\textwidth}
        \begin{algorithm}[H]
            \caption{Vertex OMWU}
            \label{algo:vertex OMWU}
            \DontPrintSemicolon
            \KwData{\makebox[5cm][l]{Polytope $\Omega\!\subseteq\!\bbR^d$ with vertices $\{\vv_1,\!...,\!\vv_k\!\}\!\eqqcolon\!\cV_\Omega$,}\newline learning rates $\eta\^t > 0$}
            \BlankLine{}
            $\vl\^0,~\vm\^0 \gets \vzero\in\bbR^d;~~\vlam\^0 \gets \frac{1}{|\cV_\Omega|}\vone\in\Delta(\cV_\Omega)$\;%
            \label{ln:vertex init}
            \For{$t=1,2,\dots$}{
            \textbf{receive} prediction $\vm\^t\in\bbR^d$ of next loss\;
            \Comment{\color{commentcolor}set $\vec{m}\^t = \vec{0}$ for non-predictive variant]}
            $\vw\^t \gets \vl\^{t-1} - \vm\^{t-1} + \vm\^t$\;
            \Hline{}
            \Comment{\color{commentcolor}Run the OMWU update on $\vlam$ using $\cA=\cV_\Omega$]\!\!\!\!}\vspace{.5mm}
            \For{$\vv \in \cV_\Omega$}{
            $\displaystyle
                \vlam\^t[\vv] \gets \frac{\vlam\^{t-1}[\vv]\cdot e^{-\eta\^t\,\langle\vw\^{t},\vv\rangle}}{\sum_{\vv' \in \cV_\Omega} \vlam\^{t-1}[\vv']\cdot e^{-\eta\^t\langle \vw\^{t}\!,\vv'\rangle}}
            $\!\!\!\!\!\!\label{ln:vertex lam update}
            }
            \Hline{}
            \Comment{\color{commentcolor}Compute new convex combination of vertices]\!\!\!\!}\vspace{.5mm}
            $\vx\^t \gets \sum_{\vv\in\cV_\Omega} \vlam\^t[\vv]\cdot\vv$\label{ln:vertex xt}\;\vspace{1mm}
            \textbf{output} $\vx\^t \in \Omega$\;
            \textbf{receive} loss vector $\vl\^t \in \bbR^d$\;
            }
        \end{algorithm}%
    \end{minipage}
\end{figure}
\section{Extensive-Form Games}
\label{app:efgs}

In a \emph{tree-form sequential decision process (TFSDP)} problem the agent
interacts with the environment in two ways: at \emph{decision points}, the
agent must act by picking an action from a set of legal actions; at
\emph{observation points}, the agent observes a signal drawn from a set of
possible signals.
Different decision points can have different sets of legal actions, and
different observation points can have different sets of possible signals.
Decision and observation points are structured as a \emph{tree}: under the standard assumption that
the agent
is not forgetful, so, it is not possible for the agent to cycle back to a
previously encountered decision or observation point by following the
structure of the decision problem.

As an example, consider the
simplified game of \emph{Kuhn poker}~\citep{Kuhn50:Simplified}, depicted
in~\cref{fig:kuhn}. Kuhn poker is a standard benchmark in the EFG-solving community.
In Kuhn poker, each player puts an ante worth $1$ into the pot. Each player is then privately dealt one card from a deck that contains $3$ unique cards (Jack, Queen, King). Then, a single round of betting then occurs, with the following dynamics. First, Player $1$ decides to either check or bet $1$. Then,
\begin{itemize}[nolistsep]
    \item If Player 1 checks Player 2 can check or raise $1$.
          \begin{itemize}[nolistsep]
              \item If Player 2 checks a showdown occurs; if Player 2 raises Player 1 can fold or call.
                    \begin{itemize}
                        \item If Player 1 folds Player 2 takes the pot; if Player 1 calls a showdown occurs.
                    \end{itemize}
          \end{itemize}
    \item If Player 1 raises Player 2 can fold or call.
          \begin{itemize}[nolistsep]
              \item If Player 2 folds Player 1 takes the pot; if Player 2 calls a showdown occurs.
          \end{itemize}
\end{itemize}
When a showdown occurs, the player with the higher card wins the pot and the game immediately ends.

\begin{figure}[th]
    \centering
    \begin{tikzpicture}[scale=1.0]
    \tikzset{edge from parent/.style={}}
    \tikzset{edge from parent path={(\tikzparentnode) -- (\tikzchildnode.north)}}
    \tikzset{level distance=1.05cm}
    \tikzset{sibling distance=.60cm}
    \Tree
     [.\node[obspt](P1){};
      [.\node[decpt](S1) {};
       [.\node[obspt](B1) {};
        [.\node[termina](T1) {};]
        [.\node[decpt](S2) {};
         [.\node[termina](T2) {};]
         [.\node[termina](T3) {};]
        ]
       ]
       [.\node[termina](S3) {};]
      ]
      [.\node[decpt](S4) {};
       [.\node[obspt](B2) {};
        [.\node[termina](T6) {};]
        [.\node[decpt](S5) {};
         [.\node[termina](T7) {};]
         [.\node[termina](T8) {};]
        ]
       ]
       [.\node[termina](S6) {};]
      ]
      [.\node[decpt](S7) {};
       [.\node[obspt](B3) {};
        [.\node[termina](T11) {};]
        [.\node[decpt](S8) {};
         [.\node[termina](T12) {};]
         [.\node[termina](T13) {};]
        ]
       ]
       [.\node[termina](S9) {};]
      ]
    ];
    
   \node[black!70!white,xshift=-4mm,yshift=2mm] at (P1) {$k_1$};
   \node[black!70!white,xshift=-4mm] at (S1) {$j_1$};
   \node[black!70!white,xshift=-4mm] at (S4) {$j_2$};
   \node[black!70!white,xshift=4mm] at (S7) {$j_3$};
   \node[black!70!white,xshift=-4mm] at (B1) {$k_2$};
   \node[black!70!white,xshift=-4mm] at (B2) {$k_3$};
   \node[black!70!white,xshift=-4mm] at (B3) {$k_4$};
   \node[black!70!white,xshift=4mm] at (S2) {$j_4$};
   \node[black!70!white,xshift=4mm] at (S5) {$j_5$};
   \node[black!70!white,xshift=4mm] at (S8) {$j_6$};
   
   \draw[semithick,dashed] (P1) -- (S1);
   \draw[semithick,dashed] (P1) -- (S4);
   \draw[semithick,dashed] (P1) -- (S7);
   \draw[semithick] (S1) -- (B1);
   \draw[semithick] (S1) -- (S3);
   \draw[semithick] (S4) -- (B2);
   \draw[semithick] (S4) -- (S6);
   \draw[semithick] (S7) -- (B3);
   \draw[semithick] (S7) -- (S9);
   \draw[semithick,dashed] (B1) -- (T1);
   \draw[semithick,dashed] (B1) -- (S2);
   \draw[semithick,dashed] (B2) -- (T6);
   \draw[semithick,dashed] (B2) -- (S5);
   \draw[semithick,dashed] (B3) -- (T11);
   \draw[semithick,dashed] (B3) -- (S8);
   
   \draw[semithick] (T2) -- (S2) -- (T3);
   \draw[semithick] (T7) -- (S5) -- (T8);
   \draw[semithick] (T12) -- (S8) -- (T13);

   \path ($(S2)+(-1mm,-3mm)$) --node[text=black,fill=white,inner ysep=.5mm,inner xsep=0,xshift=-1mm,yshift=1mm]{\small fold} (T2);
   \path ($(S2)+(1mm,-3mm)$) --node[text=black,fill=white,inner ysep=.5mm,inner xsep=0,xshift=1mm,yshift=1mm]{\small call} (T3);
   \path ($(S5)+(-1mm,-3mm)$) --node[text=black,fill=white,inner ysep=.5mm,inner xsep=0,xshift=-1mm,yshift=1mm]{\small fold} (T7);
   \path ($(S5)+(1mm,-3mm)$) --node[text=black,fill=white,inner ysep=.5mm,inner xsep=0,xshift=1mm,yshift=1mm]{\small call} (T8);
   \path ($(S8)+(-1mm,-3mm)$) --node[text=black,fill=white,inner ysep=.5mm,inner xsep=0,xshift=-1mm,yshift=1mm]{\small fold} (T12);
   \path ($(S8)+(1mm,-3mm)$) --node[text=black,fill=white,inner ysep=.5mm,inner xsep=0,xshift=1mm,yshift=1mm]{\small call} (T13);
   \path (S1) --node[text=black,fill=white,inner ysep=.5mm,inner xsep=0,yshift=1mm]{\small check} (B1);
   \path (S1) --node[text=black,fill=white,inner ysep=.5mm,inner xsep=0,yshift=1mm]{\small raise} (S3);
   \path (S4) --node[text=black,fill=white,inner ysep=.5mm,inner xsep=0,yshift=1mm]{\small check} (B2);
   \path (S4) --node[text=black,fill=white,inner ysep=.5mm,inner xsep=0,yshift=1mm]{\small raise} (S6);
   \path (S7) --node[text=black,fill=white,inner ysep=.5mm,inner xsep=0,yshift=1mm]{\small check} (B3);
   \path (S7) --node[text=black,fill=white,inner ysep=.5mm,inner xsep=0,yshift=1mm]{\small raise} (S9);

   \path (P1) --node[text=black,inner ysep=1mm,fill=white]{\small jack} (S1);
   \path (P1) --node[text=black,inner ysep=.3mm,fill=white]{\small queen} (S4);
   \path (P1) --node[text=black,inner ysep=1mm,fill=white]{\small king} (S7);

   \path (B1) --node[text=black,fill=white,inner ysep=.5mm,inner xsep=0,xshift=-1mm,yshift=1mm]{\small check} (T1);
   \path (B1) --node[text=black,fill=white,inner ysep=.5mm,inner xsep=0,xshift=1mm,yshift=1mm]{\small raise} ($(S2)$);
   \path (B2) --node[text=black,fill=white,inner ysep=.5mm,inner xsep=0,xshift=-1mm,yshift=1mm]{\small check} (T6);
   \path (B2) --node[text=black,fill=white,inner ysep=.5mm,inner xsep=0,xshift=1mm,yshift=1mm]{\small raise} ($(S5)$);
   \path (B3) --node[text=black,fill=white,inner ysep=.5mm,inner xsep=0,xshift=-1mm,yshift=1mm]{\small check} (T11);
   \path (B3) --node[text=black,fill=white,inner ysep=.5mm,inner xsep=0,xshift=1mm,yshift=1mm]{\small raise} ($(S8)$);
\end{tikzpicture}
    \caption{Tree-form sequential decision making process of the first
        acting player in the game of Kuhn poker.}
    \label{fig:kuhn}
\end{figure}
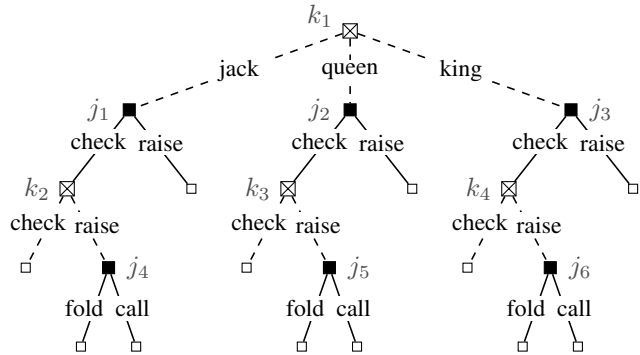

As soon as the game starts, the agent observes a private card that has been
dealt to them; this is observation point $k_1$, whose set of possible
signals is $S_{k_1} \defeq \{\text{jack},\text{queen},\text{king}\}$.
Should the agent observe the `jack' signal, the decision problem transitions to
the decision point $j_1$, where the agent must pick one action from the set
$A_{j_1} \defeq \{\text{check}, \text{raise}\}$.
If the agent picks `raise', the decision process terminates; otherwise, if
`check' is chosen, the process transitions to observation point $k_2$,
where the agent will observe whether the opponent checks (at which point
the interaction terminates) or raises.
In the latter case, the process transitions to decision point $j_4$, where the
agent picks one action from the set
$A_{j_4} \defeq \{\text{fold},\text{call}\}$.
In either case, after the action has been selected, the interaction terminates.

\section{Experimental Evaluation}\label{app:experiments}

\paragraph{Game instances}
We numerically investigate agents learning under the COLS in Kuhn and Leduc poker \citep{Kuhn50:Simplified,Southey05:Bayes}, standard benchmark games from the extensive-form games literature.
\begin{description}
    \item[\emph{Kuhn poker}] The two-player variant of Kuhn poker first appeared in \citep{Kuhn50:Simplified}. In this paper, we use the multiplayer variant, as described by \citet{Farina18:Ex}. In a multiplayer Kuhn poker game with $r$ ranks, a deck with $r$ unique cards is used. At the beginning of the game, each player pays one chip to the pot (\emph{ante}), and is dealt a single private card (their \emph{hand}). The first player to act can \emph{check} or \emph{bet}, \ie put an additional chip in the pot. Then, the second player can check or bet after a first player's check, or fold/call the first player's bet. If no bet was previously made, the third player can either check or bet, and so on in turn. If a bet is made by a player, each subsequent player needs to decide whether to \emph{fold} or \emph{call} the bet. The betting round if all players check, or if every player has had an opportunity to either fold or call the bet that was made. The player with the highest card who has not folded wins all the chips in the pot.
    \item[\emph{Leduc poker}] We use a multiplayer version of the classical Leduc hold'em poker introduced by \citet{Southey05:Bayes}. We employ game instances of rank 3. The deck consists of three suits with 3 cards each. Our instances are parametric in the maximum number of bets, which in limit hold'em is not necessarily tied to the number of players. As in Kuhn poker, we set a cap on the number of raises to one bet. As the game starts, players pay one chip to the pot. Then, two betting rounds follow. In the first one, a single private card is dealt to each player while in the second round a single board card is revealed. The raise amount is set to 2 and 4 in the first and second round, respectively.
\end{description}
For each game, we consider a 3-player and a 4-player variant. The 3-player Kuhn variant uses a deck with $r=12$ ranks. The 4-player variant uses a deck with a reduced number of ranks equal to $r=5$ to avoid excessive memory usage.

\paragraph{CFR and CFR(RM+)} Modern variants of counterfactual regret minimization (CFR) are the current practical state-of-the-art in two-player zero-sum extensive-form game solving. We implemented both the original CFR algorithm by~\citet{Zinkevich07:Regret}, and a more modern variant (which we denote `CFR(RM+)') using the Regret Matching Plus regret minimization algorithm at each decision point~\citep{Tammelin15:Solving}.

\paragraph{Discussion of results}
We compare the maximum per-player regret cumulated by KOMWU for four different choices of constant learning rate $\eta\^t = \eta \in \{ 0.1, 1, 5, 10\}$, against that cumulated by CFR and CFR(RM+).

We remark that the payoff ranges of these games are not $[0,1]$ (\ie the games have not been normalized). The payoff range of Kuhn poker is $6$ for the 3-player variant and $8$ for the 4-player variant. The payoff range of Leduc poker is $21$ for the 3-player variant and $28$ for the 4-player variant. So, a learning rate value of $\eta=0.1$ corresponds to a significantly smaller learning rate in the normalized game where the payoffs have been shifted and rescaled to lie within $[0,1]$ as required in the statements of \cref{prop:omwu near optimal,prop:omwu optimal sum,prop:omwu last iterate}.

Results are shown in \cref{fig:all games}. In all games, we observe that the maximum per-player regret cumulated by KOMWU plateaus and remains constants, unlike the CFR variants. This behavior is consistent with the near-optimal per-player regret guarantees of KOMWU (\cref{thm:komwu efg}). In the 3-player variant of Leduc poker, we observe that the largest learning rate we use, $\eta=10$, leads to divergent behavior of the learning dynamics.

\begin{figure}[H]\centering
    \includegraphics[scale=.85]{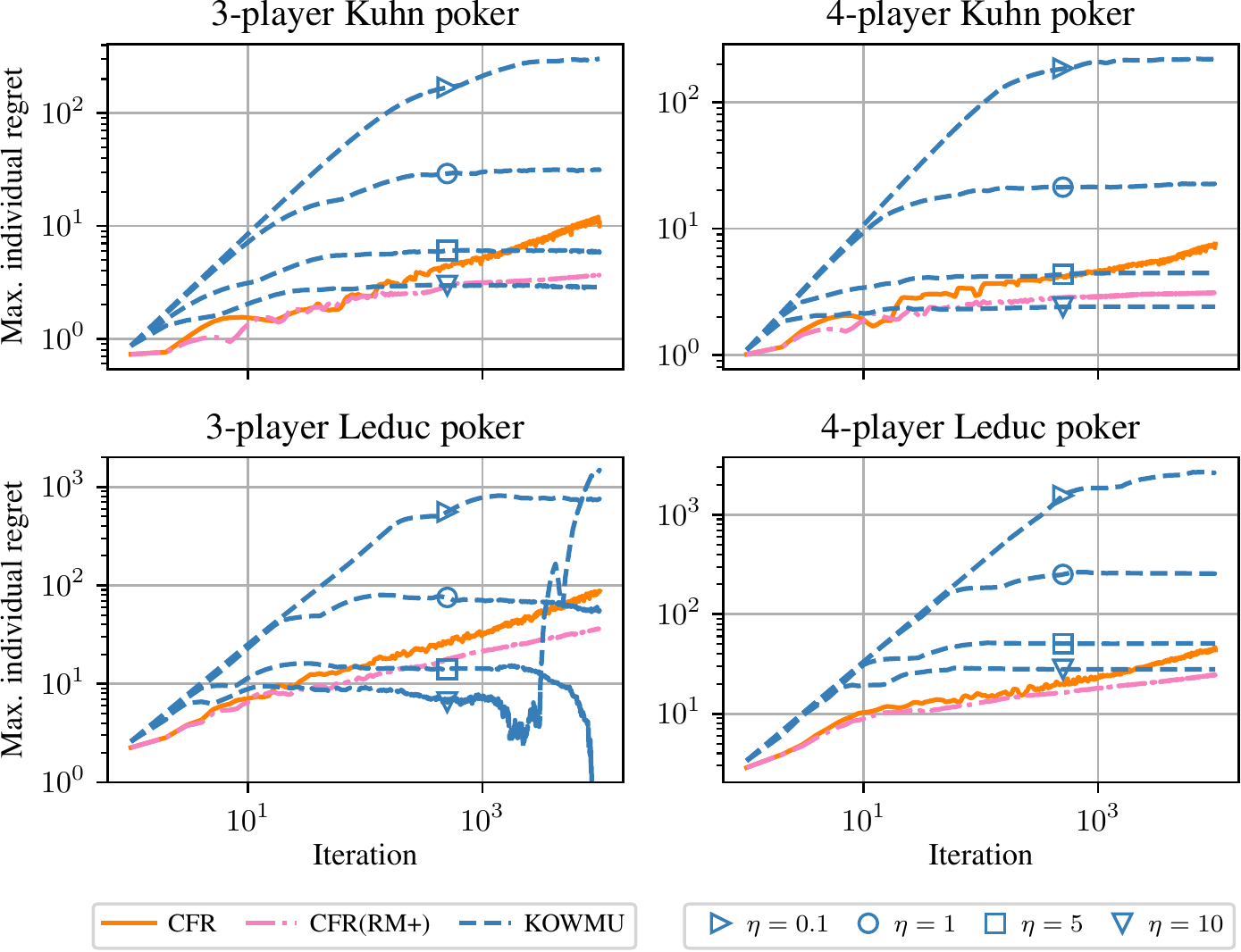}
    \caption{Maximum per-player regret cumulated by KOMWU for four different choices of constant learning rate $\eta\^t = \eta \in \{ 0.1, 1, 5, 10\}$, compared to that cumulated by CFR and CFR(RM+) in two multiplayer poker games.}
    \label{fig:all games}
\end{figure}
\section{Proofs}\label{app:proofs}

\thmefgkernel*
\begin{proof}
    In the proof of this result, we will make use of the following additional notation.
    Given any $\vx \in \bbR^{\Sigma_i}$ and a $j\in \cJ_i$, we let $\vx_{(j)} \in \bbR^{\Sigma_{i,j}^*}$ denote the subvector obtained from $\vx$ by only considering sequences $\sigma \in \Sigma_{i,j}^*$, that is, the vector whose entries are defined as $\vx_{(j)}[\sigma] = \vx[\sigma]$ for all $\sigma \in \Sigma_{i,j}^*$.

    \paragraph{Proof of~\eqref{eq:efg kernel computation}}
    Direct inspection of the definitions of $\Pi_i$ and $\Pi_{i,j}$ (given in \cref{sec:efg notation}), together with the observation that the $\{\Sigma_{i,j}^* : j \in \mathcal{C}_\emptyseq\}$ form a partition of $\Sigma^*_i$, reveals that
    \[
        \Pi_i = \mleft\{\vpi \in \{0,1\}^{\Sigma_i}: \begin{array}{l}\circled{1}~~\vpi[\emptyseq] = 1\\[1mm] \circled{2}~~\vpi_{(j)} \in \Pi_{i,j} \qquad\forall\, j\in\mathcal{C}_\emptyseq \end{array} \mright\}
        \numberthis{eq:Pi i as prod}
    \]
    The observation above can be summarized informally into the statement that ``\emph{$\Pi_i$ is equal, up to permutation of indices, to the Cartesian product $\bigtimes_{j\in \mathcal{C}_\emptyseq}\Pi_{i,j}$}''.
    The idea for the proof is then to use that Cartesian product structure in the definition of 0/1-polyhedral kernel~\eqref{eq:K Omega}, as follows
    \[
        K_{Q_i}(\vx, \vy) &= \sum_{\vec{\pi} \in \Pi_i} \prod_{\sigma \in \vpi} \vx[\sigma]\,\vy[\sigma]\\
        &=\sum_{\vpi\in\Pi_i}\mleft(\vx[\emptyseq]\,\vy[\emptyseq]\prod_{j'\in\mathcal{C}_\emptyseq}\prod_{\sigma\in\vpi_{(j')}} \vx[\sigma]\,\vy[\sigma]\mright)\\
        &=\sum_{\vpi_{(j)}\in\Pi_{i,j} ~\forall\,j \in \mathcal{C}_\emptyseq}\mleft(\vx[\emptyseq]\,\vy[\emptyseq] \prod_{j'\in\mathcal{C}_\emptyseq}\prod_{\sigma\in\vpi_{(j')}} \vx[\sigma]\,\vy[\sigma]\mright)\\
        &=\vx[\emptyseq]\,\vy[\emptyseq] \sum_{\vpi_{(j)}\in\Pi_{i,j} ~\forall\,j \in \mathcal{C}_\emptyseq}\mleft(\prod_{j'\in\mathcal{C}_\emptyseq}\prod_{\sigma\in\vpi_{(j')}} \vx[\sigma]\,\vy[\sigma]\mright)\\
        &=\vx[\emptyseq]\,\vy[\emptyseq] \prod_{j\in\mathcal{C}_\emptyseq} \sum_{\vpi_{(j)}\in\Pi_{i,j}}\prod_{\sigma\in\vpi_{(j)}} \vx[\sigma]\,\vy[\sigma]\\
        &= \vx[\emptyseq]\,\vy[\emptyseq] \prod_{j\in\mathcal{C}_\emptyseq} K_{j}(\vx, \vy),
    \]
    where the second equality follows from the fact that $\{\emptyseq\}\cup\{\Sigma_{i,j}:j\in\mathcal{C}_\emptyseq\}$ form a partition of $\Sigma_i$, the third equality follows from~\eqref{eq:Pi i as prod}, the fifth equality from the fact that each $\vpi_j\in\Pi_{i,j}$ can be chosen independently, and the last equality from the definition of partial kernel function~\eqref{eq:def Kj}.

    \paragraph{Proof of~\eqref{eq:efg kernel computation 2}}
    Similarly to what we did for~\eqref{eq:efg kernel computation}, we start by giving an inductive characterization of $\Pi_{i,j}$ as a function of the children $\Pi_{i,j'}$ for $j' \in \cup_{a\in A_j} \mathcal{C}_{ja}$. Specifically, direct inspection of the definitions of $\Pi_{i,j}$, together with the observation that the $\{\Sigma_{i,j'}^* : j' \in \cup_{a \in A_j}\mathcal{C}_{ja}\}$ form a partition of $\Sigma_{i,j}^*$, reveals that
    \[
        \Pi_{i,j} = \mleft\{\vpi \in \{0,1\}^{\Sigma_{i,j}^*}: \begin{array}{l}\circled{1}~~\sum_{a \in A_j}\vpi[ja] = 1\\[1mm] \circled{2}~~\vpi_{(j')} \in \vpi[ja]\cdot \Pi_{i,j'} \qquad\forall\, a\in A_j,~ j'\in\mathcal{C}_{ja} \end{array} \mright\}.
        \numberthis{eq:Pi j as ch prod intermediate}
    \]
    From constraint \circled{1}\, together with the fact that $\vpi[ja]\in\{0,1\}$ for all $a \in A_j$, we conclude that exactly one $a^* \in A_j$ is such that $\vpi[ja^*] = 1$, while $\vpi[ja] = 0$ for all other $a \in A_j, a \neq a^*$. So, we can rewrite~\eqref{eq:Pi j as ch prod intermediate} as
    \[
        \Pi_{i,j} = \bigcup_{a^* \in A_j}\mleft\{
        \vpi\in\{0,1\}^{\Sigma_{i,j}^*}: \begin{array}{ll}
            \circled{1}~~\vpi[ja^*] = 1                                                                                      \\
            \circled{2}~~\vpi[ja] = 0             \qquad\qquad & \forall\, a\in A_j, a \neq a^*                              \\
            \circled{3}~~\vpi_{(j')} \in \Pi_{i,j'}            & \forall\, j' \in \mathcal{C}_{ja^*}                         \\
            \circled{4}~~\vpi_{(j')} = \vzero                  & \forall\, j'\in\cup_{a \in A_j, a \neq a^*}\mathcal{C}_{ja} \\
        \end{array}
        \mright\},
        \numberthis{eq:Pi j as ch prod}
    \]
    where the union is clearly disjoint.     The above equality can be summarized informally into the statement that ``\emph{$\Pi_{i,j}$ is equal, up to permutation of indices, to a disjoint union over actions $a^*\in A_j$ of Cartesian products $\bigtimes_{j\in \mathcal{C}_{ja^*}}\Pi_{i,j}$}''.
    We can then use the same set of manipulations we already used in the proof of~\eqref{eq:efg kernel computation} to obtain
    \[
        K_{j}(\vx, \vy) &= \sum_{\vec{\pi} \in \Pi_{i,j}} \prod_{\sigma \in \vpi} \vx[\sigma]\,\vy[\sigma]\\
        &= \sum_{\vec{\pi} \in \Pi_{i,j}} \mleft( \vx[ja^*]\,\vy[ja^*]\prod_{j'\in\mathcal{C}_{ja^*}}\prod_{\sigma\in\vpi_{(j')}} \vx[\sigma]\,\vy[\sigma]\mright)\\
        &=\sum_{a^* \in A_j}\sum_{\vpi_{j'}\in\Pi_{i,j'}~\forall\,j' \in \mathcal{C}_{ja^*}} \mleft( \vx[ja^*]\,\vy[ja^*]\prod_{j'\in\mathcal{C}_{ja^*}}\prod_{\sigma\in\vpi_{(j')}} \vx[\sigma]\,\vy[\sigma]\mright)\\
        &=\sum_{a^* \in A_j} \mleft( \vx[ja^*]\,\vy[ja^*] \prod_{j'\in\mathcal{C}_{ja^*}} \sum_{\vpi_{(j')}\in\Pi_{i,j'}}\prod_{\sigma\in\vpi_{(j')}} \vx[\sigma]\,\vy[\sigma]\mright)\\
        &=\sum_{a \in A_j} \mleft( \vx[ja]\,\vy[ja] \prod_{j' \in \mathcal{C}_{ja}} K_{j'}(\vx, \vy) \mright),
    \]
    where the second equality follows from the fact that the $\{\Sigma_{i,j'}^* : j' \in \cup_{a \in A_j}\mathcal{C}_{ja}\}$ form a partition of $\Sigma_{i,j}^*$, third equality follows from~\eqref{eq:Pi j as ch prod}, the fourth equality from the fact that each $\vpi_{j'}\in\Pi_{i,j'}$ can be picked independently, and the last equality from the definition of partial kernel function~\eqref{eq:def Kj} as well as renaming $a^*$ into $a$.
\end{proof}

\propefgratio*
\begin{proof}
    Note that since $\vx > \vzero$, clearly $K_{Q_i}(\vx, \vone), K_j(\vx, 1) > 0$. Furthermore, from~\eqref{eq:diff phi} we have that for all $\sigma \in \Sigma_i$
    \[
        K_{Q_i}(\vx, \vone) - K_{Q_i}(\vx, \ebar_\sigma) &= \langle \phi_{Q_i}(\vone) - \phi_{Q_i}(\ebar_\sigma), \phi_{Q_i}(\vx)\rangle \\
        &= \sum_{\substack{\vpi \in \Pi_i\\\vpi[\sigma] = 1}}\prod_{\sigma' \in \vpi} \vx[\sigma']
        \numberthis{eq:ratio to diff}\\
        &> 0.
    \]
    The above inequality immediately implies that $0 < K_{Q_i}(\vx,\ebar_{p_j})/K_{Q_i}(\vx, \vone) < 1$ and therefore all denominators in the statement are nonzero, making the statement well-formed.

    \newcommand{\splice}[2]{(\!(#1\,|\,#2)\!)}
    In light of~\eqref{eq:ratio to diff}, we further have
    \[
    & \frac{1 - K_{Q_i}(\vec{x}, \bar{\vec{e}}_{ja}) / K_{Q_i}(\vec{x}, \vone)}{1 - K_{Q_i}(\vec{x}, \bar{\vec{e}}_{p_j}) / K_{Q_i}(\vec{x}, \vone)} = \frac{\vec{x}[ja]\prod_{j'\in\mathcal{C}_{ja}} K_{j'}(\vec{x},\vone)}{K_j(\vec{x}, \vone)}\\[2mm]
    & \hspace{2cm}\iff\quad \frac{K_{Q_i}(\vec{x}, \vone) - K_{Q_i}(\vec{x}, \bar{\vec{e}}_{ja})}{K_{Q_i}(\vec{x}, \vone) - K_{Q_i}(\vec{x}, \bar{\vec{e}}_{p_j})} = \frac{\vec{x}[ja]\prod_{j'\in\mathcal{C}_{ja}} K_{j'}(\vec{x},\vone)}{K_j(\vec{x}, \vone)}\\[2mm]
        & \hspace{2cm}\iff\quad\frac{\sum_{\vpi \in \Pi_i, \vpi[ja] = 1}\prod_{\sigma \in \vpi} \vx[\sigma]}{\sum_{\vpi \in \Pi_i, \vpi[p_j] = 1}\prod_{\sigma \in \vpi} \vx[\sigma]} = \frac{\vec{x}[ja]\prod_{j'\in\mathcal{C}_{ja}} K_{j'}(\vec{x},\vone)}{K_j(\vec{x}, \vone)}\numberthis{eq:equivalent}
    \]
    We now prove~\eqref{eq:equivalent}. Let
    \[
        \mathcal{A} \defeq \{\vpi \in \Pi_i : \vpi[ja]=1\}, \qquad \mathcal{B} \defeq \{\vpi \in \Pi_i : \vpi[p_j]=1\}
    \]
    be the domains of the summations. From the definition of $\Pi_i$ (specifically, constraints \circled{2}~in the definition of $Q_i$, of which $\Pi_i$ is a subset; see \cref{sec:efg notation}), it is clear that $\mathcal{A} \subseteq \mathcal{B}$. Furthermore, it is straightforward to check, using the definitions of $\Pi_{i,j}$, $\Pi_i$, and $\mathcal{B}$, that
    \[
        \vpi_{(j)} \in \Pi_{i,j} \qquad\forall\,\vpi\in\mathcal{B}\numberthis{eq:vpij}
    \]

    We now introduce the function $\splice{\cdot}{\cdot} : \mathcal{B} \times \Pi_{i,j} \to \mathcal{B}$ defined as follows.
    Given any $\vpi \in \mathcal{B}$ and $\vpi' \in \Pi_{i,j}$, $\splice{\vpi}{\vpi'}$ is the vector obtained from $\vpi$ by replacing all sequences at or below decision point $j$ with what is prescribed by $\vpi'$; formally,
    \[
        \splice{\vpi}{\vpi'}[\sigma] \defeq \begin{cases}
            \vpi'[\sigma] & \text{if } \sigma \in \Sigma_{i,j}^* \\
            \vpi[\sigma]  & \text{otherwise}.
        \end{cases} \qquad\quad \forall\, \vpi\in\mathcal{B}, \vpi'\in\Pi_{i,j}
        \numberthis{eq:def splice}
    \]
    It is immediate to check that $\splice{\vpi}{\vpi'}$ is indeed an element of $\mathcal{B}$.
    We now introduce the following result.

    \begin{lemma}\label{obs:B}
        There exists a set $\mathcal{P} \subseteq \mathcal{B}$ such that every $\vpi'' \in \mathcal{B}$ can be uniquely written as $\vpi'' = \splice{\vpi}{\vpi'}$ for some $\vpi \in \mathcal{P}$ and $\vpi' \in \Pi_{i,j}$. Vice versa, given any $\vpi \in \mathcal{P}$ and $\vpi' \in \Pi_{i,j}$, then $\splice{\vpi}{\vpi'} \in \mathcal{B}$.
    \end{lemma}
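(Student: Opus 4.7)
The plan is to exhibit $\mathcal{P}$ as a set of canonical representatives obtained by "normalizing" the sub-$j$ portion of every plan in $\mathcal{B}$ to a fixed choice. Concretely, pick any reference plan $\vpi_0 \in \Pi_{i,j}$ (the set is nonempty since the subtree at $j$ admits at least one deterministic play) and define
\[
\mathcal{P} \defeq \{\splice{\vpi''}{\vpi_0} : \vpi'' \in \mathcal{B}\}.
\]
By construction, every element of $\mathcal{P}$ agrees with $\vpi_0$ on $\Sigma_{i,j}^*$ and differs from other elements only outside that subtree. The key structural intuition is that $\mathcal{B}$ is essentially a Cartesian product: deciding what to play \emph{reaching} $j$ is independent of what to play \emph{after} $j$, and the splice operation $\splice{\cdot}{\cdot}$ makes this decomposition explicit.

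The proof then proceeds in three steps. First, I would verify the closure statement: for $\vpi \in \mathcal{P}$ and $\vpi' \in \Pi_{i,j}$, the splice $\splice{\vpi}{\vpi'}$ lies in $\mathcal{B}$. The $0/1$-integrality is immediate, and every sequence-form constraint $\vx[p_{j'}] = \sum_{a \in A_{j'}} \vx[j'a]$ splits cleanly into one of three cases: $j' \not\succeq j$ (inherited from $\vpi \in \mathcal{B}$), $j' \succ j$ (inherited from $\vpi' \in \Pi_{i,j}$), or $j' = j$, which is the single "interface" case. Here both sides equal $1$: the left since $p_j \notin \Sigma_{i,j}^*$ so $\splice{\vpi}{\vpi'}[p_j] = \vpi[p_j] = 1$ by $\vpi \in \mathcal{B}$; the right since $\sum_{a \in A_j} \vpi'[ja] = 1$ is exactly constraint \circled{1} in the definition of $\Pi_{i,j}$. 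The $\emptyseq$ constraint is likewise inherited, and since $p_j \notin \Sigma_{i,j}^*$, we obtain $\splice{\vpi}{\vpi'}[p_j] = 1$, placing the splice in $\mathcal{B}$. In particular $\mathcal{P} \subseteq \mathcal{B}$.

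Second, for existence of the decomposition, given $\vpi'' \in \mathcal{B}$, take $\vpi' \defeq \vpi''_{(j)}$, which lies in $\Pi_{i,j}$ by \eqref{eq:vpij}, and $\vpi \defeq \splice{\vpi''}{\vpi_0} \in \mathcal{P}$. A direct case analysis on whether $\sigma \in \Sigma_{i,j}^*$ using the definition \eqref{eq:def splice} gives $\splice{\vpi}{\vpi'}[\sigma] = \vpi''[\sigma]$ for every $\sigma$. Third, for uniqueness, suppose $\splice{\vpi_1}{\vpi'_1} = \splice{\vpi_2}{\vpi'_2}$ with $\vpi_1, \vpi_2 \in \mathcal{P}$ and $\vpi'_1, \vpi'_2 \in \Pi_{i,j}$. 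Restricting to $\Sigma_{i,j}^*$ yields $\vpi'_1 = \vpi'_2$; restricting to the complement yields $\vpi_1[\sigma] = \vpi_2[\sigma]$ off $\Sigma_{i,j}^*$, and both elements of $\mathcal{P}$ already coincide with $\vpi_0$ on $\Sigma_{i,j}^*$, so $\vpi_1 = \vpi_2$.

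The only subtle step is the interface check in the closure argument: one must confirm that replacing the subtree at $j$ with an arbitrary $\vpi' \in \Pi_{i,j}$ does not violate the flow-conservation constraint at $j$ itself, which is exactly where the hypothesis $\vpi \in \mathcal{B}$ (i.e., $\vpi[p_j]=1$) is used and matches the normalization $\sum_a \vpi'[ja] = 1$. Everything else is bookkeeping on the partition $\Sigma_i = \{\emptyseq\} \cup \Sigma_{i,j}^* \cup (\Sigma_i^* \setminus \Sigma_{i,j}^*)$.
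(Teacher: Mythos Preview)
Your proposal is correct and follows essentially the same approach as the paper: you fix a reference $\vpi_0 \in \Pi_{i,j}$, define $\mathcal{P}$ as the set of $\vpi_0$-normalized splices of elements of $\mathcal{B}$, and then verify closure, existence, and uniqueness exactly as the paper does. You supply more detail on the closure step (the ``interface'' constraint at $j$), which the paper dismisses as straightforward, but the construction and the decomposition/uniqueness arguments are identical.
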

    \begin{proof}
        The second part of the statement is straightforward. We now prove the first part.

        Fix any $\vpi^* \in \Pi_{i,j}$ and let $\mathcal{P} \defeq \{\splice{\vpi}{\vpi^*} : \vpi \in \mathcal{B}\}$. It is straightforward to verify that for any $\vpi'' \in \mathcal{B}$, the choices $\vpi \defeq \splice{\vpi''}{\vpi^*} \in \mathcal{P}$ and $\vpi' \defeq \vpi_{(j)} \in \Pi_{i,j}$ satisfy the equality $\splice{\vpi}{\vpi'} = \vpi''$. So, every $\vpi''\in\mathcal{B}$ can be expressed in \emph{at least one way} as $\vpi'' = \splice{\vpi}{\vpi'}$ for some $\vpi \in \mathcal{P}$ and $\vpi' \in \Pi_{i,j}$. We now show that the choice above is in fact the unique choice. First, it is clear from the definition of $\splice{\cdot}{\cdot}$ that $\vpi'$ must satisfy $\vpi' = \vpi''_{(j)}$, and so it is uniquely determined. Suppose now that there exist $\vpi, \tilde{\vpi}\in\mathcal{P}$ such that $\splice{\vpi}{\vpi'} = \splice{\tilde{\vpi}}{\vpi'}$. Then, $\vpi$ and $\tilde{\vpi}$ must coincide on all $\sigma \in \Sigma_i \setminus \Sigma_{i,j}^*$. However, since all elements of $\mathcal{P}$ are of the form $\splice{\vb}{\vpi^*}$ for some $\vb \in \mathcal{B}$, then $\vpi$ and $\tilde{\vpi}$ must also coincide on all $\sigma \in\Sigma_{i,j}^*$. So, $\vpi$ and $\tilde{\vpi}$ coincide on all coordinates $\sigma\in\Sigma_i$, and the statement follows.
    \end{proof}

    \cref{obs:B} exposes a convenient combinatorial structure of the set $\mathcal{B}$. In particular, it enables us to rewrite the denominator on the left-hand side of \eqref{eq:equivalent} as follows
    \[
        \sum_{\vpi \in \mathcal{B}} \prod_{\sigma\in\vpi} \vx[\sigma] &= \sum_{\vpi' \in \mathcal{P}}\sum_{\vpi''\in\Pi_{i,j}}\prod_{\sigma\in\splice{\vpi'}{\vpi''}} \vx[\sigma]\\
        &=\sum_{\vpi' \in \mathcal{P}}\sum_{\vpi''\in\Pi_{i,j}}\mleft(\prod_{\substack{\sigma\in\splice{\vpi'}{\vpi''}\\\sigma\in\Sigma_{i,j}}} \vx[\sigma]\mright)\mleft(\prod_{\substack{\sigma\in\splice{\vpi'}{\vpi''}\\\sigma\not\in\Sigma_{i,j}}} \vx[\sigma]\mright)\\
        &=\sum_{\vpi' \in \mathcal{P}}\sum_{\vpi''\in\Pi_{i,j}}\mleft(\prod_{\sigma\in\vpi''} \vx[\sigma]\mright)\mleft(\prod_{\substack{\sigma\in\vpi'\\\sigma\not\in\Sigma_{i,j}}} \vx[\sigma]\mright)\\
        &=\mleft(\sum_{\vpi''\in\Pi_{i,j}}\prod_{\sigma\in\vpi''} \vx[\sigma]\mright)\mleft(\sum_{\vpi' \in \mathcal{P}}\prod_{\substack{\sigma\in\vpi'\\\sigma\not\in\Sigma_{i,j}}} \vx[\sigma]\mright)\\
        &= K_j(\vx, \vone)\cdot \mleft(\sum_{\vpi' \in \mathcal{P}}\prod_{\substack{\sigma\in\vpi'\\\sigma\not\in\Sigma_{i,j}}} \vx[\sigma]\mright),\numberthis{eq:denom}
    \]
    where we used~\eqref{eq:def splice} in the third equality.

    We can use a similar technique to express the numerator of the left-hand side of~\eqref{eq:equivalent}. Let
    \[
        \Pi_{i,ja} \defeq \{\vpi \in \Pi_{i,j} : \vpi[ja] = 1\}.
    \]
    Using the constraints that define $\Pi_i$ and the definition of $\mathcal{A}$, it follows immediately that for any $\vpi\in\mathcal{A}$, $\vpi_{(j)} \in \Pi_{i,ja}$. Furthermore, a direct consequence of \cref{obs:B} is the following:
    \begin{corollary}\label{obs:A}
        The same set $\mathcal{P} \subseteq \mathcal{B}$ introduced in \cref{obs:B} is such that every $\vpi'' \in \mathcal{A}$ can be uniquely written as $\vpi'' = \splice{\vpi}{\vpi'}$ for some $\vpi \in \mathcal{P}$ and $\vpi' \in \Pi_{i,ja}$.
    \end{corollary}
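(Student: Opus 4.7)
The plan is to derive \cref{obs:A} as an essentially immediate specialization of \cref{obs:B} to the subset $\mathcal{A} \subseteq \mathcal{B}$, so the bulk of the work is recycled and only the coordinate at $ja$ needs tracking.

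First, I would record the inclusion $\mathcal{A} \subseteq \mathcal{B}$. Any $\vpi'' \in \mathcal{A}$ satisfies $\vpi''[ja] = 1$, and because $\vpi'' \in \Pi_i \subseteq Q_i$ it obeys the sequence-form relation $\vpi''[p_j] = \sum_{a' \in A_j} \vpi''[ja']$. Since all entries of $\vpi''$ lie in $\{0,1\}$, this forces $\vpi''[p_j] = 1$, i.e.\ $\vpi'' \in \mathcal{B}$.

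Next I would exploit the bijection $\mathcal{B} \leftrightarrow \mathcal{P} \times \Pi_{i,j}$ guaranteed by \cref{obs:B}, together with the observation that the splicing operation transparently passes through coordinates inside the subtree at $j$. Concretely, because $ja \in \Sigma^*_{i,j}$, the definition~\eqref{eq:def splice} gives $\splice{\vpi}{\vpi'}[ja] = \vpi'[ja]$ for every $\vpi \in \mathcal{P}$ and $\vpi' \in \Pi_{i,j}$. Hence an element $\vpi'' = \splice{\vpi}{\vpi'}$ of $\mathcal{B}$ belongs to $\mathcal{A}$ if and only if $\vpi'[ja] = 1$, that is, if and only if $\vpi' \in \Pi_{i,ja}$.

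Combining these two observations yields the claim. Given $\vpi'' \in \mathcal{A}$, apply \cref{obs:B} to $\vpi'' \in \mathcal{B}$ to obtain a unique pair $(\vpi,\vpi') \in \mathcal{P} \times \Pi_{i,j}$; the preceding paragraph forces $\vpi' \in \Pi_{i,ja}$, proving existence of the desired decomposition. Conversely, any $(\vpi,\vpi') \in \mathcal{P} \times \Pi_{i,ja}$ yields $\splice{\vpi}{\vpi'} \in \mathcal{B}$ by \cref{obs:B} and satisfies $\splice{\vpi}{\vpi'}[ja] = 1$, so it lies in $\mathcal{A}$. Uniqueness within $\mathcal{P} \times \Pi_{i,ja}$ is inherited from uniqueness within the larger set $\mathcal{P} \times \Pi_{i,j}$. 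I do not anticipate any substantive obstacle: the entire argument is a clean bookkeeping exercise on top of \cref{obs:B}, and the only subtlety worth naming is the commutativity of splicing with reading off a coordinate inside $\Sigma^*_{i,j}$, which is immediate from~\eqref{eq:def splice}.
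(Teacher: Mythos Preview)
Your argument is correct and is precisely the fleshing-out the paper has in mind when it introduces \cref{obs:A} as ``a direct consequence of \cref{obs:B}'': restrict the bijection $\mathcal{B}\leftrightarrow\mathcal{P}\times\Pi_{i,j}$ via the observation that $\splice{\vpi}{\vpi'}[ja]=\vpi'[ja]$ since $ja\in\Sigma^*_{i,j}$. The paper gives no separate proof beyond that one-line remark, so your write-up is simply a more explicit version of the same approach.
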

    Using \cref{obs:A} and following the same steps that led to~\eqref{eq:denom}, we express the numerator of the left-hand side of~\eqref{eq:equivalent} as
    \[
        \sum_{\vpi \in \mathcal{A}} \prod_{\sigma\in\vpi} \vx[\sigma] &= \sum_{\vpi' \in \mathcal{P}}\sum_{\vpi''\in\Pi_{i,ja}}\prod_{\sigma\in\splice{\vpi'}{\vpi''}} \vx[\sigma]\\
        &=\sum_{\vpi' \in \mathcal{P}}\sum_{\vpi''\in\Pi_{i,ja}}\mleft(\prod_{\substack{\sigma\in\splice{\vpi'}{\vpi''}\\\sigma\in\Sigma_{i,j}}} \vx[\sigma]\mright)\mleft(\prod_{\substack{\sigma\in\splice{\vpi'}{\vpi''}\\\sigma\not\in\Sigma_{i,j}}} \vx[\sigma]\mright)\\
        &=\sum_{\vpi' \in \mathcal{P}}\sum_{\vpi''\in\Pi_{i,ja}}\mleft(\prod_{\sigma\in\vpi''} \vx[\sigma]\mright)\mleft(\prod_{\substack{\sigma\in\vpi'\\\sigma\not\in\Sigma_{i,j}}} \vx[\sigma]\mright)\\
        &=\mleft(\sum_{\vpi''\in\Pi_{i,ja}}\prod_{\sigma\in\vpi''} \vx[\sigma]\mright)\mleft(\sum_{\vpi' \in \mathcal{P}}\prod_{\substack{\sigma\in\vpi'\\\sigma\not\in\Sigma_{i,j}}} \vx[\sigma]\mright).
        \numberthis{eq:numer intermediate}
    \]
    The statement then follows immediately if we can prove that
    \[
        \sum_{\vpi\in\Pi_{i,ja}}\prod_{\sigma\in\vpi} \vx[\sigma] = \vec{x}[ja]\,\prod_{j'\in\mathcal{C}_{ja}} K_{j'}(\vx, \vone).
    \]
    To do so, we use the same approach as in the proof of \cref{thm:efg kernel computation}. In fact, we can directly use the inductive characterization of $\Pi_{i,j}$ obtained in~\eqref{eq:Pi j as ch prod} to write
    \[
        \Pi_{i,ja} = \mleft\{
        \vpi\in\{0,1\}^{\Sigma_{i,j}^*}: \begin{array}{ll}
            \circled{1}~~\vpi[ja] = 1                                                                                                \\
            \circled{2}~~\vpi[ja'] = 0             \qquad\qquad & \forall\, a'\in A_j                                    , a' \neq a \\
            \circled{3}~~\vpi_{(j')} \in \Pi_{i,j'}             & \forall\, j' \in \mathcal{C}_{ja}                                  \\
            \circled{4}~~\vpi_{(j')} = \vzero                   & \forall\, j'\in\cup_{a' \in A_j, a' \neq a}\mathcal{C}_{ja'}       \\
        \end{array}
        \mright\},
    \]
    which fundamentally uncovers the \emph{Cartesian-product structure of $\Pi_{i,ja}$}. Using the same technique as \cref{thm:efg kernel computation}, we then have
    \[
        \sum_{\vpi\in\Pi_{i,ja}}\prod_{\sigma\in\vpi} \vx[\sigma] &=
        \sum_{\vpi_{(j')}\in\Pi_{i,j'}~\forall\,j' \in \mathcal{C}_{ja}} \mleft( \vx[ja]\prod_{j'\in\mathcal{C}_{ja}}\prod_{\sigma\in\vpi_{(j')}} \vx[\sigma]\mright)\\
        &= \mleft( \vx[ja] \prod_{j'\in\mathcal{C}_{ja}} \sum_{\vpi_{(j')}\in\Pi_{i,j'}}\prod_{\sigma\in\vpi_{(j')}} \vx[\sigma]\mright)\\
        &= \mleft( \vx[ja] \prod_{j' \in \mathcal{C}_{ja}} K_{j'}(\vx, \vone) \mright),
    \]
    and the statement is proven.
\end{proof}

\propnumvertices*
\begin{proof}
    The proof is by induction. As the base case consider a single decision point $\Delta^b$ with $b \leq A$ actions. Then the number of vertices is $b \leq A = A^{\|\Delta^b\|_1}$.

    For the induction step we consider two cases.
    First, consider a polytope $Q$ whose root is a decision point with $b\leq A$ actions, with each action $a$ leading to a polytope $Q_a$ whose number of vertices $v_a$ satisfies the inductive assumption (if some action $a$ is a terminal action then we overload notation and let $v_a=1$ and $\|Q_a\|_1 = 0$).
    Then, the number of vertices of $Q$ is
    \[
        \sum_{a=1}^b v_a
        &\leq \sum_{a=1}^b A^{\|Q_a\|_1} \\
        &\leq b \cdot A^{\max_{a\in \range{b}}\|Q_a\|_1}\\
        &\leq A\cdot A^{\max_{a\in \range{b}}\|Q_a\|_1} \\
        &= A^{\|Q\|_1}.
    \]

    Second, consider a polytope $Q$ whose root is an observation point with $b$ observations, with each observation $o$ leading to a polytope $Q_o$ with $v_o$ vertices, such that the inductive assumption holds.
    Then, the number of vertices of $Q$ is
    \[
        v = \prod_{o=1}^b v_o
        &\leq \prod_{o=1}^b A^{\|Q_o\|_1}
        \leq A^{\sum_{o=1}^b \|Q_o\|_1}
        = A^{\|Q\|_1}.
    \]
\end{proof}

\section{Further Applications}\label{app:applications}

In this appendix, we illustrate additional 0/1-polyhedral domains in which our polyhedral kernel can be computed efficiently.

\subsection{$n$-sets}\label{sec:nsets}

We start from $n$-sets, that is, the 0/1-polydral set $\Omega^d_n \defeq \textrm{co}\{\vpi \in \{0,1\}^d: \|\vpi\|_1 = n\}$. Learning over $n$-sets is a classic problem first considered by~\citet{warmuth2008randomized} with an application to online Principal Component Analysis.
They proposed an Online Mirror Descent algorithm operating over the convex hull $\Omega^d_n$, with per-iteration complexity of $\bigOh(d^2)$.
The Follow-the-Perturbed-Leader approach~\citep{Kalai05:Efficient} is even faster with per-iteration complexity of $\bigOh(d\log d)$, but it often leads to sub-optimal regret bounds (see discussions in~\citep{koolen2010hedging}).
Simulating MWU over the vertices of $\nset$ has been considered in for example~\citep{cesa2012combinatorial}, where they proposed to use the general approach of~\citep{Takimoto03:Path} to implement this algorithm, leading to per-iteration complexity of $\bigOh(d^2 n)$.
Below, we show that our kernelized approach admits an even faster per-iteration complexity of $\bigOh(d\min\{n, d-n\})$.

\subsubsection{Polynomial, $\bigOh(d \min\{n, d-n\})$-time kernel evaluation} Let $\vx, \vy \in \bbR^d$, and assume for now $n \le d-n$. Introduce the polynomial $p_{\vx,\vy}(z)$ of $z$, defined as
\[
    p_{\vx,\vy}(z) \defeq (\vx[1]\vy[1]\, z + 1) \cdots (\vx[d]\vy[d]\, z + 1).
\]
It is immediate to see that the coefficient of $z^n$ in the expansion of $p_{\vx,\vy}(z)$ is exactly $K_{\nset}(\vx, \vy)$. Such coefficient can be computed by directly carrying out the multiplication of the binomial terms, keeping track of the term of degree $0,\dots,n$. So, each evaluation of $K_\nset(\vx,\vy)$ can be carried out in $\bigOh(nd)$ time under the assumption that $n < d-n$.

If on the other hand $n < d-n$, we can repeat the whole argument above for the polynomial
$q_{\vx,\vy}(z) \defeq (z + \vx[1]\vy[1]) \cdots (z + \vx[d]\vy[d])$ instead. In that case, we are interested in the coefficients of $z^{d-n}$, which can be computed in $\bigOh(d(d-n))$ using the same procedure described above.

Putting together the two cases, we conclude that the computation of $K_\nset(\vx,\vy)$ requires $\bigOh(d\min\{n,d-n\})$ time.

\subsubsection{Implementing KOMWU with $\bigOh(d\min\{n, d-n\})$ per-iteration complexity}
The result described in the previous paragraph immediately implies that KOMWU can be implemented with $\bigOh(d^2\min\{n,d-n\})$-time iterations. In this subsection we refine the that result by showing that it is possible to compute the $d$ kernel evaluations $\{K_{\nset}(\vx,\ebar_k) : k =1,\dots,d\}$ required at every iteration by KOMWU so that they take cumulative $\bigOh(d\cdot\min\{n,d-n\})$ time.

To do so, we build on the technique described in the previous subsection. Assume again that $n \le d-n$. The key insight is that the coefficient of $z^n$ of the polynomial $p_{\vx,\vone}(z) / (\vx[j]\, z + 1)$ is exactly $K_{\nset}(\vx, \ebar_j)$. So, to compute all $\{K_{\nset}(\vx, \ebar_k): k =1,\dots,d\}$ we can do the following:
\begin{enumerate}[nosep,left=1mm]
    \item First, for all $k = 0,\dots,d$ and $h=0,\dots,n$, we compute the coefficient $A[k,h]$ of the $z^h$ in the expansion of $(\vx[1]\,z+1)\dots (\vx[k]\,z +1)$

          We can compute all such values in $\bigOh(dn)$ time by using dynamic programming. In particular, we have
          \[
              A[k,h] = \begin{cases}
                  1                                 & \text{if }h=0               \\
                  0                                 & \text{if }k=0 \land h\neq 0 \\
                  A[k-1,h] + \vx[k]\cdot A[k-1,h-1] & \text{otherwise.}
              \end{cases}
          \]

    \item Then, for all $k=1, \dots, d+1$ and $h=0,\dots,n$, we compute the coeffience $B[k,h]$ of $z^h$ in the expansion of $(\vx[k]\,z+1)\cdots (\vx[d]\, z+1)$

          Again, we can do that in $\bigOh(dn)$ time by using dynamic programming. Specifically,
          \[
              B[k,h] = \begin{cases}
                  1                                 & \text{if }h=0                \\
                  0                                 & \text{if }k=d+1\land h\neq 0 \\
                  B[k+1,h] + \vx[k]\cdot B[k+1,h-1] & \text{otherwise.}
              \end{cases}
          \]

    \item (Note that at this point, $K_{\nset}(\vx,\mathbf{1})$ is simply $A[d,n]$.)
    \item For each $k = 1,\dots,d$, $K_{\nset}(x, \ebar_j)$ can be computed as
          \[
              K_{\nset}(\vx,\ebar_k) = \sum_{h=0}^n A[k-1,h]\cdot B[k+1,n-h].
          \]
          The above formula takes $\bigOh(n)$ time to be computed (we need to iterate over $h=0,\dots,n$), and we need to evaluate it $d$ times (once per each $k=1,\dots,d$). So, computing all $\{K_{\nset}(\vx,\ebar_k): k =1,\dots,d\}$ takes cumulative $\bigOh(dn)$ time, as we wanted to show.
\end{enumerate}

As in the previous subsection, the case $n > d-n$ is symmetric. In that case, the set of values $\{K_{\nset}(\vx,\ebar_k): k =1,\dots,d\}$ can be computed in cumulative $\bigOh(d(d-n))$ time.

\subsection{Unit Hypercube}

Consider the hypercube $[0,1]^d$, whose vertices are all the vectors in $\{0,1\}^d$. In this case, the polyhedral kernel is simply
\[
    K_{[0,1]^d}(\vx,\vy) = (\vx[1]\cdot \vy[1] + 1) \cdots (\vx[d]\cdot \vy[d] + 1),
\]
which can be clearly evaluated in $\bigOh(d)$ time. Similarly to $n$-sets (\cref{sec:nsets}), we can avoid paying an extra $d$ factor in the per-iteration complexity of KOMWU by using the following procedure:
\begin{enumerate}
    \item For each $k = 0,\dots,d$ define $A[k] \defeq (\vx[1]\cdot \vy[1] + 1) \cdots (\vx[k] \cdot \vy[k] + 1)$. Clearly, the $A[k]$ values can be computed in $\bigOh(d)$ cumulative time.
    \item For each $k = 1,\dots,d+1$, define $B[k] \defeq (\vx[k]\cdot \vy[k] + 1) \cdots (\vx[d] \cdot \vy[d] + 1)$. Again, all $B[k]$ values can be computed in $\bigOh(d)$ cumulative time.
    \item For each $k=1,\dots,d$, we have that $K_{[0,1]^d}(\vx, \ebar_k) = A[k-1]\cdot B[k+1]$. Hence, we can compute $\{K_{[0,1]^d}(\vx,\ebar_k): k=1,\dots,d\}$ in cumulative $\bigOh(d)$ time.
\end{enumerate}

\subsection{Flows in Directed Acyclic Graphs}

The polytope $\mathcal{F}$ of flows in a generic directed acyclic graphs (DAGs) has vertices with 0/1 integer coordinates, corresponding to paths in the DAG. The 0/1-polyhedral kernel $K_{\mathcal{F}}$ corresponding to the set of flows in a DAG coincides with the kernel function introduced by \citet{Takimoto03:Path}, which was shown to be computable in polynomial-time in the size of the DAG. Consequently, $K_\mathcal{F}$ admits polynomial-time (in the size of the DAG) evaluation.

\subsection{Permutations}

When $\mathcal{P}$ is the convex hull of the set of all $d\times d$ permutation matrices, it is believed that $K_{\mathcal{P}}$ cannot be evaluated in polynomial time in $\bigOh(d)$, since the computation of the permanent of a matrix $\mat{A}$ can be expressed as $K_\Omega(\mat{A}, \vone)$. However, an $\epsilon$-approximate computation of $K_{\mathcal{P}}$ can be performed in $\bigOh(\poly(d,\log(1/\epsilon)))$ for any $\epsilon > 0$ by using a landmark result by \citet{jerrum2004polynomial}. We refer the interested reader to the paper by \citet[Section 5.3]{cesa2012combinatorial}.

\subsection{Cartesian Product}

Finally, we remark that when two 0/1-polyhedral sets have efficiently-computable 0/1-polyhedral kernels, then so does their Cartesian product. Specifically, let $\Omega\subseteq\bbR^d, \Omega'\subseteq\bbR^{d'}$ be 0/1-polyhedral sets, and let $K_\Omega, K_{\Omega'}$ be their corresponding 0/1-polyhedral kernels. Then, it follows immediately from the definition that the polyhedral kernel of $\Omega\times\Omega'$ satisfies
\newcommand{\vstack}[2]{\begin{pmatrix}#1\\#2\end{pmatrix}}
\[
    K_{\Omega\times\Omega'}\mleft(\vstack{\vx}{\vx'}, \vstack{\vy}{\vy'}\mright) = K_\Omega(\vx, \vy) \cdot K_{\Omega'}(\vx',\vy') \qquad\forall\,\vstack{\vx}{\vy}, \vstack{\vx'}{\vy'} \in \bbR^d\times\bbR^{d'}.
\]

\end{document}